\DeclareMathOperator{\argmin}{arg\,min}
\newtheorem{theorem}{Theorem}
\newtheorem{assumption}{Assumption}
\newtheorem{lemma}{Lemma}
\newtheorem{definition}{Definition}
\newtheorem{proposition}{Proposition}
\title{\LARGE \bf
Analysis of Off-Policy Multi-Step TD-Learning with Linear Function Approximation}
\author{Donghwan Lee
\thanks{D. Lee is with the Department of Electrical and Engineering, Korea Advanced Institute of Science and Technology (KAIST), Daejeon, 34141, South Korea {\tt\small
donghwan@kaist.ac.kr}.}
\thanks{This work was supported by Institute of Information communications Technology Planning Evaluation (IITP) grant funded by the Korea government (MSIT)(No.2022-0-00469)}
}
\begin{document}

\maketitle
\thispagestyle{empty}
\pagestyle{empty}

\begin{abstract}
This paper analyzes multi-step TD-learning algorithms within the ``deadly triad'' scenario, characterized by linear function approximation, off-policy learning, and bootstrapping. In particular, we prove that $n$-step TD-learning algorithms converge to a solution as the sampling horizon $n$ increases sufficiently. The paper is divided into two parts. In the first part, we comprehensively examine the fundamental properties of their model-based deterministic counterparts, including projected value iteration, gradient descent algorithms, and the control theoretic approach, which can be viewed as prototype deterministic algorithms whose analysis plays a pivotal role in understanding and developing their model-free reinforcement learning counterparts. In particular, we prove that these algorithms converge to meaningful solutions when $n$ is sufficiently large. Based on these findings, two $n$-step TD-learning algorithms are proposed and analyzed, which can be seen as the model-free reinforcement learning counterparts of the gradient and control theoretic algorithms.
\end{abstract}

\section{Introduction}
Reinforcement learning (RL)~\cite{sutton1998reinforcement} seeks to find an optimal sequence of decisions in unknown systems through experiences. Recent breakthroughs showcase RL algorithms surpassing human performance in various challenging tasks~\cite{mnih2015human,lillicrap2016continuous,heess2015memory,van2016deep,bellemare2017distributional,schulman2015trust,schulman2017proximal}. This success has ignited a surge of interest in RL, both theoretically and experimentally.

Among various algorithms, temporal-difference (TD) learning~\cite{sutton1988learning} stands as a cornerstone of RL, specifically for policy evaluation. Its convergence has been extensively studied over decades~\cite{tsitsiklis1997analysis}. However, a critical challenge emerges within the ``deadly triad'' scenario, characterized by linear function approximation, off-policy learning, and bootstrapping~\cite{sutton1998reinforcement,van2018deep,chen2023target}. In such scenarios, TD-learning can diverge, leading to unreliable value estimates.

Recently, gradient temporal-difference learning (GTD) has been developed and investigated in various studies~\cite{sutton2009convergent,sutton2009fast,ghiassian2020gradient,lee2023new,lim2022backstepping}. This method addresses the deadly triad issue by employing gradient-based schemes. However, the GTD family of algorithms requires somewhat restrictive assumptions about the underlying environment, which constitutes a limitation of the method.

On the other hand, TD-learning is usually implemented within the context of single-step bootstrapping based on a single transition, which is known as single-step TD-learning. These methods can be extended to include multiple time steps, a class of algorithms known as multi-step TD learning, to enhance performance. Recently, multi-step approaches~\cite{sutton1998reinforcement,tsitsiklis1997analysis,chen2021finite,mahmood2017multi,de2018multi,precup2001off,maei2010toward,van2016effective,mandal2023n,schulman2015high}, including $n$-step TD-learning and TD($\lambda$), have become integral to the success of modern deep RL agents, significantly improving performance~\cite{schulman2015high,yuan2019novel,hessel2018rainbow,hernandez2019understanding} in various scenarios. Despite these empirical successes and the growing body of analysis on multi-step RL, to the best of the author's knowledge, the effects and theoretical underpinnings of $n$-step TD-learning have yet to be fully explored.

Motivated by the aforementioned discussions, this paper conducts an in-depth examination of the theoretical foundations necessary to understand the core principles of $n$-step TD-learning methods and their model-based counterparts, which can be viewed as prototype deterministic algorithms whose analysis plays a pivotal role in understanding and developing their model-free RL counterparts. Specifically, we demonstrate that $n$-step TD methods can effectively address the challenges posed by the deadly triad, provided that the sampling horizon size $n$ is sufficiently large. We prove that in this case (sufficiently large $n$), the projected Bellman equation becomes a contraction mapping. This property ensures the convergence of the corresponding TD-learning algorithm towards a useful solution, which we subject to thorough analysis. Moreover, we explore the relationships between the solutions derived from the $n$-step TD methods and those obtained from the projected $n$-step Bellman equation, providing valuable insights into their interconnections.
The paper is divided into two parts: 1) model-based deterministic algorithms and 2) model-free stochastic algorithms.
\begin{enumerate}
\item Model-based deterministic algorithms: This part focuses on three algorithms crucial for the development of multi-step TD-learning algorithms: $n$-step projected value iteration ($n$-PVI), gradient descent algorithms, and dynamical system theoretic algorithms. We show that when the horizon size $n$ is sufficiently large, these algorithms exhibit favorable properties such as contraction mapping, strong convexity, and Schur stability. These properties are essential for understanding and developing $n$-step TD-learning algorithms in the next section.

\item Model-free stochastic algorithms: This part introduces two $n$-step TD-learning algorithms, a naive $n$-step TD-learning and $n$-step GTD, and analyzes their convergence based on the results from the previous section. Again, we prove that for a sufficiently large $n$, these algorithms are guaranteed to converge to useful solutions.
\end{enumerate}

\section{Preliminaries}

\subsection{Notation}
The adopted notation is as follows: ${\mathbb R}$: set of real numbers; ${\mathbb R}^n $: $n$-dimensional Euclidean
space; ${\mathbb R}^{n \times m}$: set of all $n \times m$ real
matrices; $A^T$: transpose of matrix $A$; $A \succ 0$ ($A \prec
0$, $A\succeq 0$, and $A\preceq 0$, respectively): symmetric
positive definite (negative definite, positive semi-definite, and
negative semi-definite, respectively) matrix $A$; $I$: identity matrix with appropriate dimensions; $\lambda_{\min}(A)$ and $\lambda_{\max}(A)$ for any symmetric matrix $A$: the minimum and maximum eigenvalues of $A$; $|{\cal S}|$: cardinality of a finite set $\cal S$.

\subsection{Markov decision process}
A Markov decision process (MDP) is characterized by a quadruple ${\mathcal M}: =
({\cal S},{\mathcal A},P,r,\gamma)$, where ${\mathcal S}$ is a finite
state-space, $\cal A$ is a finite action
space, $P(s'|s,a)$ represents the (unknown)
state transition probability from state $s$ to $s'$ given action
$a$, $r:{\mathcal S}\times {\mathcal A}\times {\mathcal S}\to
{\mathbb R}$ is the reward
function, and $\gamma \in (0,1)$ is the discount factor. In particular, if action
$a$ is selected with the current state $s$, then the state
transits to $s'$ with probability $P(s'|s,a)$ and incurs a
reward $r(s,a,s')$.  For convenience, we consider a deterministic reward function and simply write $r(s_k,a_k ,s_{k + 1}) =:r_{k+1},k \in \{ 0,1,\ldots \}$.

The stochastic policy is a map $\pi:{\mathcal S} \times
{\mathcal A}\to [0,1]$ representing the probability, $\pi(a|s)$, of selecting action $a$ at the current state $s$, $P^\pi$ denotes the state transition probability matrix under policy $\pi$, and $d^{\pi}:{\mathcal S} \to {\mathbb R}$ denotes the stationary probability distribution of the state $s\in {\mathcal S}$ under $\pi$. We also define
$R^\pi(s)$ as the expected reward given the policy $\pi$ and the current state $s$. The infinite-horizon discounted value function with policy $\pi$ is $V^\pi(s):={\mathbb E} \left[ \left. \sum_{k = 0}^\infty {\gamma
^k r(s_k,a_k,s_{k+1})} \right|s_0=s \right]$, where ${\mathbb E}$ stands for the expectation taken with respect to the state-action trajectories under $\pi$. Given pre-selected basis (or feature) functions $\phi_1,\ldots,\phi_m:{\mathcal S}\to {\mathbb R}$, the matrix, $\Phi \in {\mathbb R}^{|{\mathcal S}| \times m}$, called the feature matrix, is defined as a matrix whose $s$-th row vector is $\phi(s):=\begin{bmatrix} \phi_1(s) &\cdots & \phi_m(s) \end{bmatrix}$. Throughout the paper, we assume that $\Phi \in {\mathbb R}^{|{\mathcal S}| \times m}$ is a full column rank matrix. The policy evaluation problem is the problem of estimating $V^{\pi}$ given a policy $\pi$.

\subsection{Review of GTD algorithm}
In this section, we briefly review the gradient temporal difference (GTD) learning developed in~\cite{sutton2009convergent}, which tries to solve the policy evaluation problem. Roughly speaking, the goal of the policy evaluation is to find the weight vector $\theta$ such that $\Phi \theta$ approximates the true value function $V^{\pi}$. This is typically done by minimizing the so-called {\em mean-square projected Bellman error} loss function~\cite{sutton2009convergent,sutton2009fast}
\begin{align}
&\min_{\theta\in {\mathbb R}^q} {\rm MSPBE}(\theta):= \frac{1}{2}\|
\Pi (R^{\pi} + \gamma P^{\pi} \Phi \theta)-\Phi \theta \|_{D^{\beta}}^2.\label{eq:4}
\end{align}
where $R^\pi \in {\mathbb R}^{|{\mathcal S}|}$ is a vector enumerating all $R^\pi(s), s\in {\mathcal S}$, $D^{\beta}$ is a diagonal matrix with positive diagonal elements $d^{\beta}(s),s\in {\mathcal S}$, and $\|x\|_D:=\sqrt{x^T Dx}$ for any positive-definite $D$. Here, $d^{\beta}$ can be any state visit distribution under the behavior policy $\beta$ such that $d^{\beta}(s)>0,\forall s\in {\mathcal S}$. Moreover, $\Pi$ is the projection onto the range space of $\Phi$, denoted by ${\cal R}(\Phi)$: $\Pi(x):=\argmin_{x'\in {\cal R}(\Phi)}
\|x-x'\|_{D^{\beta}}^2$. The projection can be performed by the matrix
multiplication: we write $\Pi(x):=\Pi x$, where $\Pi:=\Phi(\Phi^T
D^{\beta} \Phi)^{-1}\Phi^T D^{\beta} \in {\mathbb R}^{|{\cal S}|\times |{\cal S}|}$.

Note that minimizing the objective means minimizing the error of the projected Bellman equation (PBE) $\Phi \theta  = \Pi (R^\pi   + \gamma P^\pi  \Phi \theta )$ with respect to $ \|\cdot\|_{D^\beta}$. Moreover, note that in the objective of~\eqref{eq:4}, $d^{\beta}$ depends on the behavior policy, $\beta$, while $P^{\pi}$ and $R^{\pi}$ depend on the target policy, $\pi$, that we want to evaluate. This structure allows us to obtain an off-policy learning algorithm through the importance sampling~\cite{precup2001off} or sub-sampling techniques~\cite{sutton2009convergent}. Throughout the paper, we frequently adopt the following standard assumption.
\begin{assumption}\label{assumption:2}
$\Phi ^T D^{\beta} (\gamma P^\pi   - I)\Phi $ is nonsingular, where $I$ denotes the identity matrix with an appropriate dimension.
\end{assumption}
Note that~\cref{assumption:2} is common in the literature, and is adopted in~\cite{sutton2009convergent,sutton2009fast,ghiassian2020gradient,lee2023new} for convergence of GTD algorithms. Some properties related to~\eqref{eq:4} are summarized below for convenience and completeness.
\begin{lemma}[{\cite{lee2023new}}]\label{lemma:2}
Suppose that~\cref{assumption:2} holds. Then, the following statements hold true:
\begin{enumerate}
\item A solution of~\eqref{eq:4} exists, and is unique.

\item The solution of~\eqref{eq:4} is given by
\begin{align}
\theta ^*:=-(\Phi ^T D^{\beta} (\gamma P^\pi   - I)\Phi )^{ - 1} \Phi ^T D^{\beta} R^\pi.\label{eq:theta-star}
\end{align}
\end{enumerate}
\end{lemma}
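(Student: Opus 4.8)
The plan is to reduce the MSPBE objective to an explicit strictly convex quadratic form in $\theta$, after which both claims follow from elementary convex analysis. First I would substitute the matrix form of the projection $\Pi = \Phi(\Phi^T D^\beta \Phi)^{-1}\Phi^T D^\beta$ into the objective of~\eqref{eq:4}. Since $\Phi\theta \in {\cal R}(\Phi)$ we have $\Pi \Phi\theta = \Phi\theta$, so the argument of the norm becomes $\Pi\left(R^\pi + (\gamma P^\pi - I)\Phi\theta\right)$. Writing $C := \Phi^T D^\beta \Phi$, which is symmetric positive definite because $\Phi$ has full column rank and $D^\beta \succ 0$, and using the self-adjointness identity $\Pi^T D^\beta = D^\beta \Pi$ for the projection $\Pi$ in the $D^\beta$-inner product together with $\Pi = \Phi C^{-1}\Phi^T D^\beta$, a direct computation (with the cancellation $C^{-1}\Phi^T D^\beta \Phi\, C^{-1} = C^{-1}$) collapses the quadratic form to
\begin{align}
{\rm MSPBE}(\theta) &= \frac{1}{2}\left\| \Phi^T D^\beta \left(R^\pi + (\gamma P^\pi - I)\Phi\theta\right) \right\|_{C^{-1}}^2 \nonumber\\
&= \frac{1}{2}\|A\theta + b\|_{C^{-1}}^2, \nonumber
\end{align}
where $A := \Phi^T D^\beta(\gamma P^\pi - I)\Phi$ and $b := \Phi^T D^\beta R^\pi$.

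Next I would invoke~\cref{assumption:2}, which guarantees that $A$ is nonsingular, so that the Hessian $A^T C^{-1} A \succ 0$; hence ${\rm MSPBE}$ is a strictly convex quadratic function of $\theta$ and therefore possesses a unique global minimizer. This proves the first claim. For the second claim, I would set the gradient $\nabla {\rm MSPBE}(\theta) = A^T C^{-1}(A\theta + b)$ to zero; since $A^T C^{-1}$ is invertible, this is equivalent to $A\theta + b = 0$, which yields $\theta^* = -A^{-1}b$, i.e.,~\eqref{eq:theta-star}. As a by-product, ${\rm MSPBE}(\theta^*) = 0$, so $\theta^*$ also solves the projected Bellman equation $\Phi\theta = \Pi(R^\pi + \gamma P^\pi \Phi\theta)$.

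The argument involves no genuine obstacle; the only mildly delicate point is the algebraic simplification of $\|\Pi(\cdot)\|_{D^\beta}^2$ down to a weighted norm with weight $C^{-1}$, which relies precisely on the two identities noted above. Since this result is quoted from~\cite{lee2023new}, I would keep the derivation brief, stating these identities and letting the remaining steps be routine linear algebra.
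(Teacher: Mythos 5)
Your argument is correct and is essentially the standard derivation behind this result (the paper itself gives no proof, deferring to the cited reference, where the same reduction is used): the identities $\Pi\Phi = \Phi$ and $\Pi^T D^\beta \Pi = D^\beta \Phi (\Phi^T D^\beta \Phi)^{-1}\Phi^T D^\beta$ do collapse the objective to $\tfrac12\|A\theta+b\|_{C^{-1}}^2$ with $A=\Phi^T D^\beta(\gamma P^\pi - I)\Phi$ and $b=\Phi^T D^\beta R^\pi$, after which \cref{assumption:2} gives a positive definite Hessian $A^T C^{-1}A$ and the unique stationary point $\theta^*=-A^{-1}b$, matching~\eqref{eq:theta-star}. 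No gaps.
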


Based on this objective function,~\cite{sutton2009fast} developed GTD2.

\section{Multi-step projected Bellman operator}\label{sec:dynamic-programming}

Let us consider the $n$-step Bellman operator~\cite{sutton1988learning}
\begin{align*}
{T^n}(x):=& {R^\pi } + \gamma {P^\pi }{R^\pi } +  \cdots  + {\gamma ^{n - 1}}{({P^\pi })^{n - 1}}{R^\pi }\\
& + {\gamma ^n}{({P^\pi })^n}x.
\end{align*}

Then, the corresponding projected $n$-step Bellman operator ($n$-PBO) is given by $\Pi {T^n}$. Based on this, the corresponding $n$-step projected value iteration ($n$-PVI) is given by
\begin{align}
\Phi {\theta _{k + 1}} = \Pi {T^n}(\Phi {\theta _k}),\quad k \in \{ 0,1, \ldots \} ,\quad {\theta _0} \in {\mathbb R}^m\label{eq:projected-VI}
\end{align}

Note that at each iteration $k$, $\theta _{k + 1}$ can be uniquely determined given $\theta _k$ because $\Pi {T^n}(\Phi {\theta _k}) \in {\cal R}(\Phi)$, and the unique solution solves
\begin{align*}
\Phi \theta  = \Pi {T^n}(\Phi {\theta _k}),
\end{align*}
and is given by
\begin{align*}
{\theta _{k + 1}} = {({\Phi ^T}D^\beta\Phi )^{ - 1}}{\Phi ^T}D^\beta {T^n}(\Phi {\theta _k}).
\end{align*}

It is important to note that $\Pi \in {\mathbb R}^{|{\cal S}|\times |{\cal S}|}$ is a projection onto the column space of the feature matrix $\Phi$ with respect to the weighted norm ${\left\|  \cdot  \right\|_{{D^\beta }}}$, and satisfies the nonexpansive mapping property ${\left\| {\Pi x - \Pi y} \right\|_{{D^\beta }}} \le {\left\| {x - y} \right\|_{{D^\beta }}}$ with respect to ${\left\|  \cdot  \right\|_{{D^\beta }}}$.
On the other hand, for the Bellman operator $T^n$, we can consider the two cases:
\begin{enumerate}
\item on-policy case $\beta = \pi$,
\item off-policy case $\beta \neq \pi$.
\end{enumerate}

In the on-policy case $\beta = \pi$, it can be easily proved that $T^n$ is a contraction mapping with respect to the norm ${\left\|  \cdot  \right\|_{{D^\beta }}}$ with the contraction factor $\gamma^n$.
\begin{lemma}
The mapping $T^n$ satisfies
\begin{align*}
{\left\| {{T^n}(x) - {T^n}(y)} \right\|_{{D^\pi }}} \le {\gamma ^n}{\left\| {x - y} \right\|_{{D^\pi }}},\quad \forall x,y \in {\mathbb R}^{|S|}.
\end{align*}
\end{lemma}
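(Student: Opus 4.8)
The plan is to reduce everything to the single fact that the stochastic matrix $P^\pi$ is nonexpansive in the $\left\|\cdot\right\|_{D^\pi}$-norm. First I would observe that the reward terms in $T^n$ are independent of the argument, so they cancel in the difference: for any $x,y$,
\begin{align*}
{T^n}(x) - {T^n}(y) = {\gamma ^n}{({P^\pi })^n}(x - y).
\end{align*}
Hence it suffices to prove that ${\left\| {{{({P^\pi })}^n}z} \right\|_{{D^\pi }}} \le {\left\| z \right\|_{{D^\pi }}}$ for all $z\in{\mathbb R}^{|{\cal S}|}$, after which multiplying by $\gamma^n$ and substituting $z = x-y$ gives the claim.

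Next I would establish the base case $n=1$, i.e.\ ${\left\| {{P^\pi }z} \right\|_{{D^\pi }}} \le {\left\| z \right\|_{{D^\pi }}}$. Writing out the weighted norm, $\left\| {P^\pi z} \right\|_{D^\pi}^2 = \sum_{s} d^\pi(s)\left(\sum_{s'} P^\pi(s'|s) z(s')\right)^2$, and applying Jensen's (or Cauchy--Schwarz) inequality to the inner convex combination, $\left(\sum_{s'} P^\pi(s'|s) z(s')\right)^2 \le \sum_{s'} P^\pi(s'|s) z(s')^2$. Summing against $d^\pi(s)$ and exchanging the order of summation, the bound becomes $\sum_{s'} z(s')^2 \sum_{s} d^\pi(s) P^\pi(s'|s)$; the stationarity identity $\sum_{s} d^\pi(s) P^\pi(s'|s) = d^\pi(s')$ then collapses this to $\sum_{s'} d^\pi(s') z(s')^2 = \left\| z \right\|_{D^\pi}^2$.

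Finally I would iterate: applying the $n=1$ bound $n$ times yields ${\left\| {{{({P^\pi })}^n}z} \right\|_{{D^\pi }}} \le {\left\| {{{({P^\pi })}^{n-1}}z} \right\|_{{D^\pi }}} \le \cdots \le {\left\| z \right\|_{{D^\pi }}}$, which combined with the first step completes the proof. I do not expect a genuine obstacle here; the only subtlety worth stating carefully is the use of the stationarity of $d^\pi$ under $P^\pi$ (which is exactly what makes the on-policy case work and fails in general off-policy), and the direction of the stochastic matrix convention in the Jensen step, i.e.\ that rows of $P^\pi$ sum to one so each row is a genuine probability vector.
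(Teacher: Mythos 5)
Your proof is correct and is precisely the standard argument the paper defers to (it omits the proof and cites Lemma~4 of Tsitsiklis and Van Roy, whose core is exactly the nonexpansiveness $\left\| P^\pi z \right\|_{D^\pi} \le \left\| z \right\|_{D^\pi}$ via Jensen's inequality and the stationarity identity $\sum_{s} d^\pi(s) P^\pi(s'|s) = d^\pi(s')$). You correctly isolate the cancellation of the reward terms, the single-step nonexpansiveness, and the iteration, and you rightly flag that stationarity of $d^\pi$ under $P^\pi$ is the ingredient that breaks in the off-policy case.
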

\begin{proof}
The proof can be easily done by following the main ideas of~\cite[Lemma~4]{tsitsiklis1997analysis}, and omitted here for brevity.
\end{proof}

Therefore, $n$-PBO$, \Pi {T^n}$, is also a contraction with the factor $\gamma^n$.
\begin{lemma}
The mapping $\Pi T^n$ satisfies
\begin{align*}
{\left\| {{\Pi T^n}(x) - {\Pi T^n}(y)} \right\|_{{D^\pi }}} \le {\gamma ^n}{\left\| {x - y} \right\|_{{D^\pi }}},\quad \forall x,y \in {\mathbb R}^{|S|}.
\end{align*}
\end{lemma}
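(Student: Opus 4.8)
The plan is to obtain the claim by composing two facts that are already available: the nonexpansiveness of the projection $\Pi$ and the $\gamma^n$-contraction property of $T^n$ from the preceding lemma, both taken with respect to the same weighted norm $\|\cdot\|_{D^\pi}$. The point worth emphasizing up front is that we are in the on-policy case $\beta = \pi$, so that $D^\beta = D^\pi$; this is precisely what lets the norm in which $\Pi$ is nonexpansive coincide with the norm in which $T^n$ contracts.

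First I would recall why $\Pi$ is nonexpansive in $\|\cdot\|_{D^\pi}$. Since $\Pi = \Phi(\Phi^T D^\pi \Phi)^{-1}\Phi^T D^\pi$ is the orthogonal projection onto ${\cal R}(\Phi)$ with respect to the inner product $\langle u,v\rangle_{D^\pi} := u^T D^\pi v$, for any $z \in {\mathbb R}^{|{\cal S}|}$ the vector $z - \Pi z$ is $D^\pi$-orthogonal to ${\cal R}(\Phi)$, and in particular to $\Pi z \in {\cal R}(\Phi)$. The Pythagorean identity then gives $\|z\|_{D^\pi}^2 = \|\Pi z\|_{D^\pi}^2 + \|z - \Pi z\|_{D^\pi}^2 \ge \|\Pi z\|_{D^\pi}^2$, and since $\Pi$ is linear, substituting $z = x - y$ yields ${\left\| {\Pi x - \Pi y} \right\|_{D^\pi}} = {\left\| {\Pi(x-y)} \right\|_{D^\pi}} \le {\left\| {x - y} \right\|_{D^\pi}}$. (This is exactly the nonexpansiveness property already asserted in the text, stated for $D^\beta$; here I simply instantiate it at $\beta = \pi$.)

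Then I would chain the two bounds. For arbitrary $x, y \in {\mathbb R}^{|{\cal S}|}$, applying nonexpansiveness of $\Pi$ to the points $T^n(x)$ and $T^n(y)$ gives ${\left\| {\Pi T^n(x) - \Pi T^n(y)} \right\|_{D^\pi}} \le {\left\| {T^n(x) - T^n(y)} \right\|_{D^\pi}}$, and the preceding lemma bounds the right-hand side by $\gamma^n {\left\| {x - y} \right\|_{D^\pi}}$. Combining the two inequalities is exactly the assertion.

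I do not expect a genuine obstacle here: once nonexpansiveness of $\Pi$ is in hand, the result is a one-line composition, and nonexpansiveness itself is the standard Pythagorean argument sketched above. The only item that deserves a moment of care is the bookkeeping of which norm appears where — the chaining works only because $\beta = \pi$ forces $D^\beta = D^\pi$; in the off-policy case this particular argument fails and one must instead invoke the large-$n$ analysis developed in the later sections.
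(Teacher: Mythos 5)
Your proof is correct and follows exactly the route the paper intends: the paper states this lemma without proof, but the immediately preceding text supplies both ingredients you use --- the nonexpansiveness of $\Pi$ in $\left\| \cdot \right\|_{D^\beta}$ and the $\gamma^n$-contraction of $T^n$ in $\left\| \cdot \right\|_{D^\pi}$ --- and the on-policy identification $D^\beta = D^\pi$ that you correctly flag is what makes the composition go through.
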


In conclusion, by Banach fixed point theorem, $n$-PVI in~\eqref{eq:projected-VI} converges to its unique fixed point because $n$-PVO $\Pi {T^n}$ is a contraction with respect to ${\left\|  \cdot  \right\|_{{D^\beta }}}$.

On the other hand, in the off-policy case $\beta \neq \pi$, $T^n$ is no more a contraction mapping with respect to ${\left\|  \cdot  \right\|_{{D^\beta }}}$, and so is $\Pi T^n$. Therefore, $n$-PVI in~\eqref{eq:projected-VI} may not converge in some cases.
However, it can be proved that for a sufficiently large $n$, $\Pi T^n$ becomes contraction with respect to the different norm ${\left\|  \cdot  \right\|_\infty}$. In this paper, we will formally and rigorously address this property. Before proceeding, the following assumptions are introduced.
\begin{assumption}
Throughout the paper, we consider the off-policy scenario, $\beta \neq \pi$.
Moreover, we assume that ${\left\| \Pi  \right\|_\infty } \ge 1$ because otherwise, $\Pi T^n$ is trivially a contraction with respect to ${\left\|  \cdot  \right\|_\infty}$.
\end{assumption}

Next, we establish the contraction property.
\begin{theorem}\label{thm:contraction1}
There exists a positive integer $n^*$ such that the mapping $\Pi T^n$ is a contraction with respect to ${\left\|  \cdot  \right\|_\infty}$ for all $n \geq n^*$. One such $n^*$ is given by
\begin{align*}
{n^*} = \left\lceil {\frac{{\ln (\left\| \Pi  \right\|_\infty ^{ - 1})}}{{\ln (\gamma )}}} \right\rceil  + 1,
\end{align*}
where $\left\lceil  \cdot  \right\rceil$ stands for the ceiling function.
\end{theorem}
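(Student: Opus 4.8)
The plan is to exploit the affine structure of $T^n$ in the variable $x$. First I would observe that $T^n$ is affine with linear part $\gamma^n (P^\pi)^n$, so that for all $x,y \in {\mathbb R}^{|{\cal S}|}$ the reward-dependent terms cancel and
\begin{align*}
T^n(x) - T^n(y) = \gamma^n (P^\pi)^n (x - y).
\end{align*}
Hence $\|T^n(x) - T^n(y)\|_\infty \le \gamma^n \|(P^\pi)^n\|_\infty \|x-y\|_\infty$, where $\|\cdot\|_\infty$ on the right denotes the induced matrix $\infty$-norm, i.e.\ the maximum absolute row sum.

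The second step is to note that $P^\pi$ is a row-stochastic matrix (nonnegative entries, each row summing to one), hence so is every power $(P^\pi)^n$, and therefore $\|(P^\pi)^n\|_\infty = 1$. This yields the intermediate bound $\|T^n(x) - T^n(y)\|_\infty \le \gamma^n \|x-y\|_\infty$ for every $n$. Composing with the projection and using the definition of the operator norm $\|\Pi\|_\infty$ gives
\begin{align*}
\|\Pi T^n(x) - \Pi T^n(y)\|_\infty \le \|\Pi\|_\infty\, \gamma^n \|x-y\|_\infty,
\end{align*}
so $\Pi T^n$ is a contraction in $\|\cdot\|_\infty$ precisely when $\gamma^n \|\Pi\|_\infty < 1$.

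The final step is to solve this inequality for $n$. Since $\gamma \in (0,1)$ we have $\ln\gamma < 0$, so $\gamma^n \|\Pi\|_\infty < 1$ is equivalent to $n\ln\gamma < -\ln\|\Pi\|_\infty = \ln(\|\Pi\|_\infty^{-1})$, and dividing by the negative number $\ln\gamma$ reverses the inequality to $n > \ln(\|\Pi\|_\infty^{-1})/\ln\gamma$. Because $\|\Pi\|_\infty \ge 1$ the right-hand side is nonnegative, and $n^* = \lceil \ln(\|\Pi\|_\infty^{-1})/\ln\gamma \rceil + 1$ is a positive integer strictly larger than it; therefore every $n \ge n^*$ satisfies $\gamma^n \|\Pi\|_\infty \le \gamma^{n^*}\|\Pi\|_\infty < 1$, establishing the claim.

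\textbf{Main obstacle.} There is no substantive difficulty here; the only points requiring care are (i) verifying $\|(P^\pi)^n\|_\infty = 1$ from stochasticity of $P^\pi$, and (ii) correctly reversing the inequality when dividing by $\ln\gamma < 0$, together with the ceiling-plus-one bookkeeping needed to guarantee the \emph{strict} inequality $\gamma^{n^*}\|\Pi\|_\infty < 1$ (in particular in the edge case where the ratio is an integer or zero).
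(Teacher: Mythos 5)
Your proposal is correct and follows essentially the same route as the paper: expand the affine operator so the reward terms cancel, bound $\|\Pi(P^\pi)^n(x-y)\|_\infty$ by $\|\Pi\|_\infty\|(P^\pi)^n\|_\infty\|x-y\|_\infty$ with $\|(P^\pi)^n\|_\infty=1$ from row-stochasticity, and solve $\gamma^n\|\Pi\|_\infty<1$ for $n$ via logarithms. Your treatment of the sign flip when dividing by $\ln\gamma<0$ and of the ceiling-plus-one edge case is slightly more explicit than the paper's, but the argument is the same.
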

\begin{proof}
Noting that
\begin{align*}
{\left\| {\Pi {T^n}(x) - \Pi {T^n}(y)} \right\|_\infty } =& {\gamma ^n}{\left\| {\Pi ({{({P^\pi })}^n}x - {{({P^\pi })}^n}y)} \right\|_\infty }\\
\le& {\gamma ^n}{\left\| \Pi  \right\|_\infty }{\left\| {{{({P^\pi })}^n}(x - y)} \right\|_\infty }\\
\le& {\gamma ^n}{\left\| \Pi  \right\|_\infty }{\left\| {{{({P^\pi })}^n}} \right\|_\infty }{\left\| {x - y} \right\|_\infty }\\
=& {\gamma ^n}{\left\| \Pi  \right\|_\infty }{\left\| {x - y} \right\|_\infty }
\end{align*}
Therefore, for a sufficiently large $n^*$, we have ${\gamma ^n}{\left\| \Pi  \right\|_\infty } < 1$ for all $n \ge n^*$, which implies that $\Pi T^n$ is a contraction mapping with respect to ${\left\|  \cdot  \right\|_\infty}$. In particular, ${\gamma ^n}{\left\| \Pi  \right\|_\infty } < 1$ is equivalent to $n\ln (\gamma ) < \ln (\left\| \Pi  \right\|_\infty ^{ - 1})$, or equivalently, $n > \frac{{\ln (\left\| \Pi  \right\|_\infty ^{ - 1})}}{{\ln (\gamma )}}$. Taking the ceiling function on the left-hand side, a sufficient condition is $n > \left\lceil {\frac{{\ln (\left\| \Pi  \right\|_\infty ^{ - 1})}}{{\ln (\gamma )}}} \right\rceil$. This completes the proof.
\end{proof}

In the next theorem, we establish a connection between the contraction property of $\Pi {T^n}$ and the nonsingularity of ${\Phi ^T}{D^\beta }(I - {\gamma ^n}{({P^\pi })^n})\Phi$, which plays an important role throughout the paper.
\begin{theorem}\label{thm:contraction-property1}
${\Phi ^T}{D^\beta }(I  - {\gamma ^n}{({P^\pi })^n})\Phi$ is nonsingular if $\Pi {T^n}$ is a contraction.
\end{theorem}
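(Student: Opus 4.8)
The plan is to exploit the equivalence between the fixed point of $\Pi T^n$ and the solution of a linear system whose coefficient matrix is precisely ${\Phi ^T}{D^\beta }(I - {\gamma ^n}{({P^\pi })^n})\Phi$. First I would invoke the Banach fixed point theorem: since $\Pi T^n$ is assumed to be a contraction on the complete normed space $(\mathbb{R}^{|{\cal S}|},\|\cdot\|_\infty)$, it has a \emph{unique} fixed point $x^*\in\mathbb{R}^{|{\cal S}|}$. Because $\Pi T^n(x)\in{\cal R}(\Phi)$ for every $x$, this fixed point lies in ${\cal R}(\Phi)$, so $x^*=\Phi\theta^*$ for some $\theta^*\in\mathbb{R}^m$, and $\theta^*$ is unique by the full column rank of $\Phi$.

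Next I would rewrite the fixed point equation $\Phi\theta=\Pi T^n(\Phi\theta)$ in explicit linear-algebraic form. Setting $b:=\sum_{i=0}^{n-1}\gamma^i(P^\pi)^iR^\pi$, so that $T^n(\Phi\theta)=b+\gamma^n(P^\pi)^n\Phi\theta$, and substituting $\Pi=\Phi(\Phi^TD^\beta\Phi)^{-1}\Phi^TD^\beta$, the equation becomes $\Phi\theta=\Phi(\Phi^TD^\beta\Phi)^{-1}\Phi^TD^\beta(b+\gamma^n(P^\pi)^n\Phi\theta)$. Using that $\Phi$ has full column rank (so $\Phi u=\Phi v\Rightarrow u=v$), this is equivalent to $\theta=(\Phi^TD^\beta\Phi)^{-1}\Phi^TD^\beta(b+\gamma^n(P^\pi)^n\Phi\theta)$; multiplying through by $\Phi^TD^\beta\Phi$ and rearranging yields
\begin{align*}
\Phi^TD^\beta(I-\gamma^n(P^\pi)^n)\Phi\,\theta=\Phi^TD^\beta b.
\end{align*}
Each of these steps is reversible, so $\theta$ solves this $m\times m$ linear system if and only if $\Phi\theta$ is a fixed point of $\Pi T^n$.

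Finally I would combine the two observations. The displayed linear system admits the solution $\theta^*$, and it admits no other solution: any second solution $\theta'$ would give a fixed point $\Phi\theta'$ of $\Pi T^n$, forcing $\Phi\theta'=x^*=\Phi\theta^*$ and hence $\theta'=\theta^*$. A square matrix for which some linear system $A\theta=c$ has a unique solution is necessarily nonsingular — otherwise its null space is nontrivial, and adding any nonzero null vector to $\theta^*$ would produce a distinct solution. Hence $\Phi^TD^\beta(I-\gamma^n(P^\pi)^n)\Phi$ is nonsingular. I do not expect a serious obstacle here; the only point requiring care is justifying \emph{both} directions of the equivalence between fixed points of $\Pi T^n$ and solutions of the linear system via the full column rank of $\Phi$, together with the remark that $b$ (and therefore $\Phi^TD^\beta b$) is a fixed vector, so uniqueness of the fixed point transfers verbatim to uniqueness of the solution of the linear system.
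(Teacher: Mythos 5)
Your proposal is correct and follows essentially the same route as the paper's proof: invoke Banach's fixed point theorem to get a unique fixed point, rewrite the fixed point equation as the square linear system with coefficient matrix $\Phi^TD^\beta(I-\gamma^n(P^\pi)^n)\Phi$, and conclude nonsingularity from uniqueness of the solution. If anything, you are more careful than the paper in spelling out that the correspondence between fixed points and solutions of the linear system is a genuine equivalence (via the full column rank of $\Phi$), which is exactly the point needed to transfer uniqueness.
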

\begin{proof}
Suppose that $\Pi {T^n}$ is a contraction. Then, it admits a unique fixed point $\theta^*$ satisfying $\Phi \theta^* = \Pi {T^n} (\Phi \theta^*)$, which is equivalent to
\begin{align*}
&{\Phi ^T}D^\beta({R^\pi } + \gamma {P^\pi }{R^\pi } +  \cdots  + {\gamma ^{n - 1}}{({P^\pi })^{n - 1}}{R^\pi })\\
 =& {\Phi ^T}D^\beta (I - {\gamma ^n}{({P^\pi })^n})\Phi \theta
\end{align*}
For the above equation to have a unique solution, ${\Phi ^T}{D^\beta }(I - {\gamma ^n}{({P^\pi })^n})\Phi$ should be nonsingular. This completes the proof.
\end{proof}

Therefore, for any $n \geq n^*$, $n$-PVI in~\eqref{eq:projected-VI} converges to the unique fixed point, denoted by $\theta _*^n$, which satisfies
\begin{align}
\Phi \theta _*^n = \Pi {T^n}(\Phi \theta _*^n).\label{eq:1}
\end{align}

The unique fixed point is given as follows.
\begin{lemma}\label{lemma:1}
Suppose that $n \geq n^*$ so that $\Pi {T^n}$ is a contraction with respect to ${\left\|  \cdot  \right\|_\infty}$.
Then, the unique fixed point of $\Pi {T^n}$, denoted by $\theta _*^n$, is given by
\begin{align}
\theta_*^n =& {\left[ {{\Phi ^T}{D^\beta }(I - {\gamma ^n}{{({P^\pi })}^n})\Phi } \right]^{ - 1}}{\Phi ^T}{D^\beta }\nonumber\\
&\times ({R^\pi } + \gamma {P^\pi }{R^\pi } +  \cdots  + {\gamma ^{n - 1}}{({P^\pi })^{n - 1}}{R^\pi }).\label{eq:solution1}
\end{align}
\end{lemma}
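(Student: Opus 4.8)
The plan is to start from the defining fixed point relation and reduce it to a linear system in $\theta$, then invert. By \cref{thm:contraction1}, for $n \ge n^*$ the operator $\Pi T^n$ is a contraction with respect to ${\left\| \cdot \right\|_\infty}$, so the Banach fixed point theorem guarantees that a unique $\theta_*^n$ with $\Phi \theta_*^n = \Pi T^n(\Phi \theta_*^n)$ exists; it remains only to derive its closed form. Substituting the definition of $T^n$ and of the projection matrix $\Pi = \Phi(\Phi^T D^\beta \Phi)^{-1}\Phi^T D^\beta$ into the fixed point equation gives
\begin{align*}
\Phi \theta_*^n = \Phi (\Phi^T D^\beta \Phi)^{-1}\Phi^T D^\beta \bigl( R^\pi + \gamma P^\pi R^\pi + \cdots + \gamma^{n-1}(P^\pi)^{n-1} R^\pi + \gamma^n (P^\pi)^n \Phi \theta_*^n \bigr).
\end{align*}

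Since $\Phi$ has full column rank, $\Phi u = \Phi v$ implies $u = v$, so the displayed identity is equivalent to $\theta_*^n = (\Phi^T D^\beta \Phi)^{-1}\Phi^T D^\beta T^n(\Phi \theta_*^n)$. Left-multiplying by the invertible matrix $\Phi^T D^\beta \Phi$ and then collecting the terms containing $\theta_*^n$ on one side yields
\begin{align*}
\Phi^T D^\beta (I - \gamma^n (P^\pi)^n)\Phi\, \theta_*^n = \Phi^T D^\beta \bigl( R^\pi + \gamma P^\pi R^\pi + \cdots + \gamma^{n-1}(P^\pi)^{n-1} R^\pi \bigr).
\end{align*}
By \cref{thm:contraction-property1}, the contraction of $\Pi T^n$ implies that $\Phi^T D^\beta (I - \gamma^n (P^\pi)^n)\Phi$ is nonsingular, so multiplying both sides by its inverse produces exactly the claimed formula~\eqref{eq:solution1}, and uniqueness is inherited from the Banach fixed point theorem (or, equivalently, from the nonsingularity just invoked).

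The argument is essentially a routine computation; the only point requiring care is the passage from the equality of the projected quantities in $\mathcal{R}(\Phi)$ to an equality of the coordinate vectors, which is where the standing full-column-rank assumption on $\Phi$ is used, together with the identity $\Phi^T D^\beta \Pi = \Phi^T D^\beta$. I do not anticipate any genuine obstacle beyond this bookkeeping, since the nonsingularity needed to invert the final system has already been established in \cref{thm:contraction-property1}.
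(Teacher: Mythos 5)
Your proposal is correct and follows essentially the same route as the paper: both rewrite the fixed point equation $\Phi\theta_*^n = \Pi T^n(\Phi\theta_*^n)$ as the linear system $\Phi^T D^\beta (I - \gamma^n (P^\pi)^n)\Phi\,\theta_*^n = \Phi^T D^\beta(R^\pi + \gamma P^\pi R^\pi + \cdots + \gamma^{n-1}(P^\pi)^{n-1}R^\pi)$ and then invoke the nonsingularity from \cref{thm:contraction-property1} to invert. Your version merely spells out the intermediate bookkeeping (full column rank of $\Phi$ and $\Phi^T D^\beta \Pi = \Phi^T D^\beta$) that the paper leaves implicit.
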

\begin{proof}
The fixed point equation in~\eqref{eq:1} can be rewritten by
\begin{align*}
&{\Phi ^T}D^\beta({R^\pi } + \gamma {P^\pi }{R^\pi } +  \cdots  + {\gamma ^{n - 1}}{({P^\pi })^{n - 1}}{R^\pi })\\
 =& {\Phi ^T}D^\beta(I - {\gamma ^n}{({P^\pi })^n})\Phi \theta _*^n
\end{align*}
By~\cref{thm:contraction-property1}, ${\Phi ^T}D^\beta(I - {\gamma ^n}{({P^\pi })^n})\Phi$ is nonsingular, and hence, the unique solution is given by~\eqref{eq:solution1}.
\end{proof}

The convergence speed of $n$-PVI in~\eqref{eq:projected-VI} is given below.
\begin{theorem}(Convergence)\label{thm:convergence1}
Suppose that $n \geq n^*$ so that $\Pi {T^n}$ is a contraction with respect to ${\left\|  \cdot  \right\|_\infty}$.
$n$-PVI in~\eqref{eq:projected-VI} satisfies
\begin{align*}
{\left\| {\Phi {\theta _k} - \Phi \theta _*^n} \right\|_\infty } \le {({\gamma ^n}{\left\| \Pi  \right\|_\infty })^k}{\left\| {\Phi {\theta _0} - \Phi \theta _*^n} \right\|_\infty }.
\end{align*}
\end{theorem}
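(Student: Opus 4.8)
The plan is to reuse the one-step Lipschitz estimate that was already established inside the proof of \cref{thm:contraction1}, namely that for every $x,y\in{\mathbb R}^{|{\cal S}|}$,
\[
{\left\| {\Pi {T^n}(x) - \Pi {T^n}(y)} \right\|_\infty } \le {\gamma ^n}{\left\| \Pi  \right\|_\infty }{\left\| {x - y} \right\|_\infty },
\]
and to simply unroll the $n$-PVI recursion. Under the standing hypothesis $n\ge n^*$, the constant $c:={\gamma ^n}{\left\| \Pi  \right\|_\infty }$ is strictly less than $1$, so this is a genuine contraction factor.

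First I would recall that, as noted just after~\eqref{eq:projected-VI}, each iterate $\Phi{\theta _k}$ is well defined and lies in ${\cal R}(\Phi)$, and that by \cref{lemma:1} the fixed point satisfies $\Phi\theta _*^n = \Pi {T^n}(\Phi\theta _*^n)$. Subtracting the identities $\Phi{\theta _{k+1}} = \Pi {T^n}(\Phi{\theta _k})$ and $\Phi\theta _*^n = \Pi {T^n}(\Phi\theta _*^n)$ and applying the displayed estimate with $x=\Phi{\theta _k}$, $y=\Phi\theta _*^n$ gives the per-step bound
\[
{\left\| {\Phi {\theta _{k+1}} - \Phi \theta _*^n} \right\|_\infty } \le c\,{\left\| {\Phi {\theta _k} - \Phi \theta _*^n} \right\|_\infty }.
\]
A routine induction on $k$ then yields ${\left\| {\Phi {\theta _k} - \Phi \theta _*^n} \right\|_\infty } \le c^k{\left\| {\Phi {\theta _0} - \Phi \theta _*^n} \right\|_\infty }$, which is exactly the claimed inequality; since $c<1$ this also establishes geometric convergence $\Phi{\theta _k}\to\Phi\theta _*^n$, and hence $\theta_k\to\theta_*^n$ because $\Phi$ has full column rank.

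There is no real obstacle in this argument; it is the standard Banach-iteration estimate. The only point requiring a little care is that the contraction property is stated on the ambient space ${\mathbb R}^{|{\cal S}|}$ equipped with $\|\cdot\|_\infty$ rather than on the parameter space ${\mathbb R}^m$, so the convergence rate must be phrased in terms of the value-function iterates $\Phi{\theta _k}$ (equivalently, on ${\cal R}(\Phi)$ with the $\infty$-norm); translating this into a rate on $\theta_k$ itself would incur the (harmless, finite) conditioning constants of $\Phi$, so the cleanest statement is the one given in the theorem.
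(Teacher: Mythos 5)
Your argument is correct and is exactly what the paper intends: its proof consists of the single remark that the result ``is straightforward from the contraction property in \cref{thm:contraction1},'' and your unrolling of the one-step Lipschitz estimate ${\left\| {\Pi {T^n}(x) - \Pi {T^n}(y)} \right\|_\infty } \le {\gamma ^n}{\left\| \Pi  \right\|_\infty }{\left\| {x - y} \right\|_\infty }$ together with the fixed-point identity and induction is precisely the omitted routine detail. No gaps; your closing remark about the rate being stated on the value-function iterates $\Phi\theta_k$ rather than on $\theta_k$ is a fair and accurate observation.
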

\begin{proof}
It is straightforward from the contraction property in~\Cref{thm:contraction1}.
\end{proof}

The results in~\cref{lemma:1} tell us that the solution $\theta_*^n$ of $n$-PBE~\eqref{eq:1} varies according to $n$. A natural question that arises is: what is the significance of the solution $\theta_*^n$, and what constitutes the true solution that we want to find. To answer this fundamental question, the desired true solution is first defined below.
\begin{definition}[Optimal solution]\label{def:1}
The optimal solution, denoted by $\theta _*^\infty$, is defined as a vector $\theta _*^\infty$ that satisifes
\begin{align}
\Phi \theta _*^\infty  = \Pi {V^\pi },\label{eq:optimal-solution}
\end{align}
where
\begin{align*}
{V^\pi }: = \sum\limits_{k = 0}^\infty  {{\gamma ^k}{{({P^\pi })}^k}{R^\pi }}
\end{align*}
is the true value function.
\end{definition}

Note that the solution in~\eqref{eq:optimal-solution} can be interpreted as the least-square solution of $\theta _*^\infty  = \arg {\min _{\theta  \in {\mathbb R}^n}}f(\theta )$, where
\begin{align*}
f(\theta ) = \frac{1}{2}\left\| {{V^\pi } - \Phi \theta } \right\|_{{D^\beta }}^2.
\end{align*}

In terms of~\cref{def:1}, the next natural question is regarding the relevance of $\theta_*^n$ in comparison to the true optimal solution $\theta_*^\infty$ and the true value function $V^\pi$.
\begin{theorem}
Suppose that $n \geq n^*$ so that $\Pi {T^n}$ is a contraction with respect to ${\left\|  \cdot  \right\|_\infty}$.
Then, we have
\begin{align}
{\left\| {\Phi \theta^n_* - {V^\pi }} \right\|_\infty } \le \frac{1}{{1 - {\gamma ^n}{{\left\| \Pi  \right\|}_\infty }}}{\left\| {\Pi {V^\pi } - {V^\pi }} \right\|_\infty }\label{eq:2}
\end{align}
and
\begin{align}
{\left\| {\Phi \theta^n_* - \Phi {\theta_*^\infty}} \right\|_\infty } \le \frac{{{\gamma ^n}{{\left\| \Pi  \right\|}_\infty }}}{{1 - {\gamma ^n}{{\left\| \Pi  \right\|}_\infty }}}{\left\| {\Pi {V^\pi } - {V^\pi }} \right\|_\infty }\label{eq:3}
\end{align}
\end{theorem}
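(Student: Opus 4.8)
The plan is to exploit two facts: first, that the true value function $V^\pi$ is itself a fixed point of the $n$-step Bellman operator, i.e.\ $T^n(V^\pi) = V^\pi$; and second, that $\Phi\theta^\infty_* = \Pi V^\pi$ by \cref{def:1} together with idempotency of $\Pi$, so that $\Pi T^n(V^\pi) = \Pi V^\pi = \Phi\theta^\infty_*$. The first fact follows by iterating the one-step Bellman equation $V^\pi = R^\pi + \gamma P^\pi V^\pi$ exactly $n$ times (or directly from the series definition of $V^\pi$ in \cref{def:1}), which telescopes precisely into $V^\pi = R^\pi + \gamma P^\pi R^\pi + \cdots + \gamma^{n-1}(P^\pi)^{n-1}R^\pi + \gamma^n (P^\pi)^n V^\pi = T^n(V^\pi)$. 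These identities convert the quantities of interest into differences of the contraction map $\Pi T^n$ evaluated at $\Phi\theta^n_*$ and at $V^\pi$.

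For the first bound~\eqref{eq:2}, I would apply the triangle inequality in $\|\cdot\|_\infty$:
\begin{align*}
\|\Phi\theta^n_* - V^\pi\|_\infty \le \|\Phi\theta^n_* - \Pi V^\pi\|_\infty + \|\Pi V^\pi - V^\pi\|_\infty.
\end{align*}
For the first term on the right, use $\Phi\theta^n_* = \Pi T^n(\Phi\theta^n_*)$ (the fixed-point equation~\eqref{eq:1}) and $\Pi V^\pi = \Pi T^n(V^\pi)$ to write $\|\Phi\theta^n_* - \Pi V^\pi\|_\infty = \|\Pi T^n(\Phi\theta^n_*) - \Pi T^n(V^\pi)\|_\infty$, and then invoke the contraction estimate from the proof of \cref{thm:contraction1}, namely $\le \gamma^n\|\Pi\|_\infty\|\Phi\theta^n_* - V^\pi\|_\infty$. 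Substituting back gives $\|\Phi\theta^n_* - V^\pi\|_\infty \le \gamma^n\|\Pi\|_\infty\|\Phi\theta^n_* - V^\pi\|_\infty + \|\Pi V^\pi - V^\pi\|_\infty$, and since $n \ge n^*$ guarantees $\gamma^n\|\Pi\|_\infty < 1$, I can rearrange to isolate $\|\Phi\theta^n_* - V^\pi\|_\infty$ and obtain~\eqref{eq:2}.

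For the second bound~\eqref{eq:3}, I would start from $\|\Phi\theta^n_* - \Phi\theta^\infty_*\|_\infty = \|\Phi\theta^n_* - \Pi V^\pi\|_\infty = \|\Pi T^n(\Phi\theta^n_*) - \Pi T^n(V^\pi)\|_\infty \le \gamma^n\|\Pi\|_\infty\|\Phi\theta^n_* - V^\pi\|_\infty$, exactly as above, and then simply plug in the bound~\eqref{eq:2} just established for $\|\Phi\theta^n_* - V^\pi\|_\infty$, yielding the factor $\frac{\gamma^n\|\Pi\|_\infty}{1 - \gamma^n\|\Pi\|_\infty}$.

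I do not anticipate a serious obstacle here; the argument is a standard fixed-point perturbation estimate. The one point requiring a little care is the identity $T^n(V^\pi) = V^\pi$ and, relatedly, that $\Pi$ being an idempotent projection gives $\Pi(\Pi V^\pi) = \Pi V^\pi$, so that $\Phi\theta^\infty_*$ is genuinely a fixed point of the \emph{projected} operator composed with $\Pi$ in the right way; once these are in place the rest is triangle inequality plus the contraction factor $\gamma^n\|\Pi\|_\infty$ already derived in \cref{thm:contraction1}, and the requirement $n \ge n^*$ is exactly what makes the denominator $1 - \gamma^n\|\Pi\|_\infty$ positive.
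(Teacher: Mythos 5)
Your proposal is correct and follows essentially the same route as the paper: both arguments reduce to the identity $T^n(V^\pi)=V^\pi$ (which the paper writes in expanded series form as $T^n(\Phi\theta^n_*)-V^\pi=\gamma^n(P^\pi)^n(\Phi\theta^n_*-V^\pi)$), the decomposition via adding and subtracting $\Pi V^\pi$, the contraction estimate $\gamma^n\|\Pi\|_\infty$ from \cref{thm:contraction1}, and a rearrangement using $1-\gamma^n\|\Pi\|_\infty>0$. Your phrasing via the triangle inequality and the Lipschitz bound on $\Pi T^n$ is just a slightly cleaner packaging of the identical computation.
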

\begin{proof}
By hypothesis, $\Pi {T^n}$ is a contraction, which means that there exists a unique solution $\theta ^n_*$ satisfying $n$-PBE~\eqref{eq:1}, which can be rewritten by
\begin{align*}
\Pi {T^n}(\Phi \theta^n_*) - {V^\pi } = \Phi \theta^n_* - {V^\pi }.
\end{align*}

The left-hand side can be written as
\begin{align*}
&\Phi \theta^n_* - {V^\pi }\nonumber\\
=&\Pi {T^n}(\Phi \theta^n_*) - {V^\pi }\nonumber\\
=& \Pi {T^n}(\Phi \theta^n_*) - {V^\pi } - \Pi {V^\pi } + \Pi {V^\pi }\nonumber\\
=& \Pi ({T^n}(\Phi \theta^n_*) - {V^\pi }) + \Pi {V^\pi } - {V^\pi }\nonumber\\
=& \Pi \left( {{V^\pi } - \sum\limits_{k = n}^\infty  {{\gamma ^k}{{({P^\pi })}^k}{R^\pi }}  + {\gamma ^n}{{({P^\pi })}^n}\Phi \theta^n_* - {V^\pi }} \right)\nonumber\\
& + \Pi {V^\pi } - {V^\pi }\nonumber\\
=& \Pi ({\gamma ^n}{({P^\pi })^n}\Phi \theta^n_* - {\gamma ^n}{({P^\pi })^n}{V^\pi }) + \Pi {V^\pi } - {V^\pi }.
\end{align*}

Next, taking the norm ${\left\|  \cdot  \right\|_\infty}$ on both sides of the above inequality leads to
\begin{align*}
&{\left\| {\Phi \theta^n_* - {V^\pi }} \right\|_\infty }\\
=& {\left\| {\Pi ({\gamma ^n}{{({P^\pi })}^n}\Phi \theta^n_* - {\gamma ^n}{{({P^\pi })}^n}{V^\pi }) + \Pi {V^\pi } - {V^\pi }} \right\|_\infty }\\
\le& {\left\| {\Pi ({\gamma ^n}{{({P^\pi })}^n}\Phi \theta^n_* - {\gamma ^n}{{({P^\pi })}^n}{V^\pi })} \right\|_\infty } + {\left\| {\Pi {V^\pi } - {V^\pi }} \right\|_\infty }\\
\le& {\gamma ^n}{\left\| \Pi  \right\|_\infty }{\left\| {{{({P^\pi })}^n}} \right\|_\infty }{\left\| {\Phi \theta^n_* - {V^\pi }} \right\|_\infty }\\
&\times {\left\| {{{({P^\pi })}^n}\Phi \theta^n_* - {{({P^\pi })}^n}{V^\pi }} \right\|_\infty }\\
& + {\left\| {\Pi {V^\pi } - {V^\pi }} \right\|_\infty }\\
=& {\gamma ^n}{\left\| \Pi  \right\|_\infty }{\left\| {\Phi \theta^n_* - {V^\pi }} \right\|_\infty } + {\left\| {\Pi {V^\pi } - {V^\pi }} \right\|_\infty },
\end{align*}
which yields
\begin{align*}
(1 - {\gamma ^n}{\left\| \Pi  \right\|_\infty }){\left\| {\Phi \theta^n_* - {V^\pi }} \right\|_\infty } \le {\left\| {\Pi {V^\pi } - {V^\pi }} \right\|_\infty }.
\end{align*}
By hypothesis, $n \geq n^*$ implies that $1 - {\gamma ^n}{\left\| \Pi  \right\|_\infty } > 0$ holds. Therefore, the last inequality leads to~\eqref{eq:2}.

Similarly, combining~\eqref{eq:1} and~\eqref{eq:optimal-solution} yields
\begin{align*}
&\Phi (\theta _*^n - \theta _*^\infty )\\
=& \Pi ({T^n}(\Phi \theta _*^n) - {V^\pi })\\
=& \Pi \left( {{V^\pi } - \sum\limits_{k = n}^\infty  {{\gamma ^k}{{({P^\pi })}^k}{R^\pi }}  + {\gamma ^n}{{({P^\pi })}^n}\Phi \theta _*^n - {V^\pi }} \right)\\
=& \Pi \left( {{\gamma ^n}{{({P^\pi })}^n}\Phi \theta _*^n - \sum\limits_{k = n}^\infty  {{\gamma ^k}{{({P^\pi })}^k}{R^\pi }} } \right)\\
 =& {\gamma ^n}\Pi ({({P^\pi })^n}\Phi \theta _*^n - {({P^\pi })^n}{V^\pi }).
\end{align*}
Now, taking the norm ${\left\|  \cdot  \right\|_\infty}$ on both sides of the above inequality leads to
\begin{align*}
&{\left\| {\Phi (\theta _*^n - \theta _*^\infty )} \right\|_\infty }\\
=& {\left\| {\Pi {{({P^\pi })}^n}\Phi \theta _*^n - \Pi {{({P^\pi })}^n}{V^\pi }} \right\|_\infty }\\
\le& {\gamma ^n}{\left\| \Pi  \right\|_\infty }{\left\| {\Phi \theta _*^n - {V^\pi }} \right\|_\infty }\\
\le& {\gamma ^n}{\left\| \Pi  \right\|_\infty }\frac{1}{{1 - {\gamma ^n}{{\left\| \Pi  \right\|}_\infty }}}{\left\| {\Pi {V^\pi } - {V^\pi }} \right\|_\infty },
\end{align*}
where the second inequality comes from~\eqref{eq:2}. This completes the proof.
\end{proof}

The inequality in~\eqref{eq:2} tells us an error bound between $\Phi \theta^n_*$ and the true value function ${V^\pi }$.
Moreover,~\eqref{eq:3} gives an error bound between $\Phi \theta^n_*$ and the true optimal solution $\Phi \theta^\infty_*$.
One can observe that the second bound in~\eqref{eq:3} vanish as $n \to \infty$, which is reasonable because as $n \to \infty$, the left-hand side of $n$-PVI in~\eqref{eq:1} becomes identical to the left-hand side of~\eqref{eq:optimal-solution} which is the projection of the true value function without the bootstrapping. On the other hand, the first bound in~\eqref{eq:2} does not vanish as $n \to \infty$. This is because even though $n \to \infty$, there still remains a fundamental error between the true value function and the estimated value function in the linear function class, which cannot be overcame.

Until now, we have studied some properties of $n$-PBO and the corresponding $n$-PVI in~\eqref{eq:1}. These properties play important roles for the development of the corresponding model-free algorithms. In the next sections, we will study some alternative approaches based on gradients to solve the policy evaluation problem.

\section{Gradient operator~I}\label{sec:gradient1}

Let us consider the objective function, called the $n$-ste[ mean-square projected Bellman error ($n$-MSPBE) loss function~\cite{sutton2009convergent,sutton2009fast}
\begin{align}
f(\theta ) = \frac{1}{2}\left\| {\Pi {T^n}(\Phi \theta ) - \Phi \theta } \right\|_{D^\beta}^2\label{eq:objective1}
\end{align}
and the corresponding optimization problem
\begin{align}
{\bar \theta ^n}: = \argmin _{\theta  \in {\mathbb R}^m}f(\theta ).\label{eq:opt1}
\end{align}

The objective function~\eqref{eq:objective1} is a popular class of objective functions that plays an important role in the analysis of TD-learning and GTD algorithms~\cite{sutton2009convergent,sutton2009fast}. The corresponding gradient is given by
\begin{align}
{\nabla _\theta }f(\theta ) =& {\Phi ^T}{({\gamma ^n}{({P^\pi })^n} - I)^T}{\Pi ^T}{D^\beta }(\Pi {T^n}(\Phi \theta ) - \Phi \theta),\nonumber\\
=& {\Phi ^T}{({\gamma ^n}{({P^\pi })^n} - I)^T}{D^\beta }\Phi {({\Phi ^T}{D^\beta }\Phi )^{ - 1}}\nonumber\\
&\times {\Phi ^T}{D^\beta }(\Pi {T^n}(\Phi \theta ) - \Phi \theta ),\label{eq:gradient1}
\end{align}
and the Hessian is
\begin{align*}
\nabla _\theta ^2f(\theta )
=& {\Phi ^T}{({\gamma ^n}{({P^\pi })^n} - I)^T}{D^\beta }\Phi {({\Phi ^T}D^\beta \Phi )^{ - 1}}\\
&\times {\Phi ^T}{D^\beta }({\gamma ^n}{({P^\pi })^n} - I)\Phi\\
=:& \Gamma
\end{align*}

In this section, we consider the gradient descent algorithm
\begin{align}
{\theta _{k + 1}} = {\theta _k} - \alpha {\nabla _\theta }f({\theta _k}),\quad k \in \{ 0,1, \ldots \} ,\quad {\theta _0} \in {\mathbb R}^m,\label{eq:GD-algo1}
\end{align}
which can be an alternative to the dynamic programming algorithm in~\eqref{eq:projected-VI}. The main reason we consider this deterministic gradient descent algorithm is that it can be potentially applied for the development of model-free RLs such as the GTD algorithms~\cite{sutton2009convergent,sutton2009fast,lee2023new} and the residual gradient algorithms~\cite{baird1995residual}.

In the sequel, we will investigate some important properties of the objective function~\eqref{eq:objective1} and its gradient~\eqref{eq:gradient1} such as the convexity and solution analysis. The first natural question is whether or not the objective function~\eqref{eq:objective1} is convex. Since $\Gamma \succeq 0$, it is indeed true.
\begin{theorem}[Convexity]
For any $n \geq 1$, the objective function~\eqref{eq:objective1} is convex.
\end{theorem}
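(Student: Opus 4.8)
The plan is to exploit the fact that $f$ is a quadratic function whose Hessian has already been computed to be the constant matrix $\Gamma$; convexity then follows as soon as $\Gamma \succeq 0$. The key structural observation is that $\Gamma$ factors through an orthogonal projection, so it is automatically a ``sandwich'' $B^T M B$ with $M \succeq 0$.

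Concretely, first I would use that $D^\beta$ is diagonal with strictly positive diagonal entries, hence admits a symmetric positive-definite square root $(D^\beta)^{1/2}$. Setting $A := (\gamma^n(P^\pi)^n - I)\Phi$, I would rewrite the Hessian as
\[
\Gamma \;=\; A^T D^\beta \Phi(\Phi^T D^\beta\Phi)^{-1}\Phi^T D^\beta A \;=\; \bigl((D^\beta)^{1/2}A\bigr)^T M \bigl((D^\beta)^{1/2}A\bigr),
\]
where $M := (D^\beta)^{1/2}\Phi(\Phi^T D^\beta\Phi)^{-1}\Phi^T(D^\beta)^{1/2}$. A short computation using $(\Phi^T D^\beta\Phi)^{-1}(\Phi^T D^\beta\Phi) = I$ (valid since $\Phi$ has full column rank and $D^\beta \succ 0$) shows $M = M^T$ and $M^2 = M$, so $M$ is an orthogonal projection and in particular $M \succeq 0$. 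Hence, for every $x \in {\mathbb R}^m$,
\[
x^T \Gamma x \;=\; \bigl\| M (D^\beta)^{1/2} A x \bigr\|_2^2 \;\ge\; 0,
\]
i.e. $\Gamma \succeq 0$. Since $\nabla_\theta^2 f(\theta) = \Gamma \succeq 0$ for all $\theta$ and $f$ is quadratic, $f$ is convex.

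An even shorter route I would also mention: because $\Phi\theta \in {\cal R}(\Phi)$ we have $\Pi\Phi\theta = \Phi\theta$, so
\[
\Pi T^n(\Phi\theta) - \Phi\theta \;=\; \Pi\bigl(b^n + (\gamma^n(P^\pi)^n - I)\Phi\theta\bigr), \qquad b^n := \sum_{k=0}^{n-1}\gamma^k(P^\pi)^k R^\pi,
\]
which is an affine function of $\theta$; then $f$ is the composition of this affine map with the convex map $\tfrac12\|\cdot\|_{D^\beta}^2$, hence convex. I do not expect any genuine obstacle here. The only point deserving a line of care is verifying that $M$ (equivalently $\Pi$ viewed in the $D^\beta$-geometry) really is an orthogonal projection so that $M \succeq 0$ — this is exactly where full column rank of $\Phi$ and positive-definiteness of $D^\beta$ enter — together with the (visibly automatic) observation that the argument is uniform in $n$, since $n$ appears only inside $A$ and $b^n$.
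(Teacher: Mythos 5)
Your proposal is correct and follows essentially the same route as the paper, which simply asserts that the Hessian $\Gamma \succeq 0$ and concludes convexity; your sandwich factorization through the orthogonal projection $M$ (and the alternative affine-composition argument) just supplies the details that the paper leaves implicit. No gaps.
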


Therefore, the typical gradient descent algorithm shown below can be applied to find a global solution of~\eqref{eq:opt1}, $\bar \theta ^n$, with sublinear convergence rates, which are also a stationary point satisfying
\begin{align}
{\nabla _\theta }f({{\bar \theta }^n}) = 0.\label{eq:stationary1}
\end{align}

However, when $n=1$, without~\cref{assumption:2}, there is no guarantee that the stationary point satisfying~\eqref{eq:stationary1} is identical to the unique fixed point, $\theta^n_*$, of $n$-PBE in~\eqref{eq:1} because ${\Phi ^T}{({\gamma ^n}{P^\pi} - I)^T}{D^\beta }\Phi$ has a non-trivial null space. To overcome this issue, we can consider larger $n$.
In particular, when $n$ is sufficiently large, the objective function~\eqref{eq:objective1} is strongly convex.
\begin{theorem}[Strong convexity]\label{thm:convexity1}
If $n \geq n^*$ so that $\Pi {T^n}$ is a contraction, then the objective function~\eqref{eq:objective1} is $\mu$-strongly convex with $\mu = {\lambda _{\min }}(\Gamma )$.
\end{theorem}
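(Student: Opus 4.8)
The plan is to exploit the fact that $f$ in~\eqref{eq:objective1} is a quadratic function whose Hessian is the constant matrix $\nabla_\theta^2 f(\theta) = \Gamma$ computed just above the statement. Recall that a twice-differentiable function is $\mu$-strongly convex if and only if $\nabla_\theta^2 f(\theta) \succeq \mu I$ for all $\theta$; since here the Hessian equals the fixed matrix $\Gamma$ independently of $\theta$, the claim reduces to showing that $\Gamma \succeq \lambda_{\min}(\Gamma) I$ (which is automatic for a symmetric matrix) together with $\lambda_{\min}(\Gamma) > 0$, i.e.\ $\Gamma \succ 0$. So the entire content of the theorem is the strict positive definiteness of $\Gamma$.

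To establish $\Gamma \succ 0$, I would write $\Gamma = M^T {(\Phi^T D^\beta \Phi)^{-1}} M$, where $M := \Phi^T D^\beta (\gamma^n (P^\pi)^n - I)\Phi \in {\mathbb R}^{m \times m}$. Because $\Phi$ has full column rank and $D^\beta \succ 0$, the matrix $\Phi^T D^\beta \Phi$ is positive definite, and hence so is its inverse. Consequently $x^T \Gamma x = (Mx)^T {(\Phi^T D^\beta \Phi)^{-1}} (Mx) \ge 0$ for all $x$, so $\Gamma \succeq 0$ unconditionally, and $x^T \Gamma x = 0$ forces $Mx = 0$. Therefore $\Gamma \succ 0$ is \emph{equivalent} to $M$ being nonsingular.

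The last step invokes~\cref{thm:contraction-property1}: under the hypothesis $n \ge n^*$ the mapping $\Pi T^n$ is a contraction, so $\Phi^T D^\beta (I - \gamma^n (P^\pi)^n)\Phi = -M$ is nonsingular, hence $M$ is nonsingular. This yields $\Gamma \succ 0$, so $\mu := \lambda_{\min}(\Gamma) > 0$, and combining this with $\nabla_\theta^2 f(\theta) \equiv \Gamma \succeq \mu I$ gives $\mu$-strong convexity.

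The main obstacle is really just the passage from $\Gamma \succeq 0$ to $\Gamma \succ 0$; everything else is routine. What makes this step painless is recognizing that $\Gamma$ has the congruence-type form $M^T S M$ with $S = {(\Phi^T D^\beta \Phi)^{-1}} \succ 0$, which localizes the question to the nonsingularity of $M$ — and that nonsingularity is exactly the conclusion of~\cref{thm:contraction-property1}, already available under the standing assumption $n \ge n^*$. (Note that this is precisely the place where large $n$ is used: for $n = 1$ one would instead need~\cref{assumption:2}, as remarked before the statement.)
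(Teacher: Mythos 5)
Your proposal is correct and follows essentially the same route as the paper: both invoke \cref{thm:contraction-property1} to get nonsingularity of $M = \Phi^T D^\beta(\gamma^n(P^\pi)^n - I)\Phi$ and then conclude $\Gamma = M^T(\Phi^T D^\beta\Phi)^{-1}M \succ 0$ from the positive definiteness of $(\Phi^T D^\beta\Phi)^{-1}$, yielding $\mu = \lambda_{\min}(\Gamma) > 0$. Your write-up is in fact slightly more explicit than the paper's about why the congruence-type form $M^T S M$ with $S \succ 0$ and $M$ nonsingular forces strict positive definiteness.
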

\begin{proof}
By~\cref{thm:contraction-property1}, ${\Phi ^T}{D^\beta }({\gamma ^n}{({P^\pi })^n} - I)\Phi $ is nonsingular.
Then, the Hessian of~\eqref{eq:objective1} is positive definite because ${\Phi ^T}D^\beta\Phi  \succ 0 \Leftrightarrow {({\Phi ^T}D^\beta\Phi )^{ - 1}} \succ 0 \Leftrightarrow {\Phi ^T}{({\gamma ^n}{({P^\pi })^n} - I)^T}D^\beta \Phi {({\Phi ^T}D^\beta \Phi )^{ - 1}}{\Phi ^T}D^\beta ({\gamma ^n}{({P^\pi })^n} - I)\Phi  \succ 0$. Therefore,~\eqref{eq:objective1} is strongly convex, and the coefficient is given by $\mu = {\lambda _{\min }}(\Gamma )$~\cite[Thm.~2.1.11]{nesterov2018lectures}. This completes the proof.
\end{proof}

It can be shown that when strongly convex, the unique global optimal solution of~\eqref{eq:opt1} satisfying~\eqref{eq:stationary1} is identical to the unique fixed point, $\theta^n_*$, of $n$-PBE in~\eqref{eq:1} without~\cref{assumption:2}.
\begin{theorem}[Stationary point]
Suppose that $n \geq n^*$ so that $\Pi {T^n}$ is a contraction. Then, the unique stationary point, ${\bar \theta }^n$, satisfying~\eqref{eq:stationary1} is the unique fixed point, $\theta^n_*$, of $n$-PBE in~\eqref{eq:1}.
\end{theorem}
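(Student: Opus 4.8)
The plan is to establish the claimed equality in two steps: first showing that $\theta^n_*$ is a stationary point, and then that it is the only one. For the first step, recall from \cref{lemma:1} that $\theta^n_*$ is the unique solution of the $n$-PBE \eqref{eq:1}, which means $\Pi T^n(\Phi\theta^n_*) - \Phi\theta^n_* = 0$. Substituting this directly into the closed form of the gradient \eqref{eq:gradient1} kills the rightmost factor, so $\nabla_\theta f(\theta^n_*) = 0$ and $\theta^n_*$ satisfies the stationarity condition \eqref{eq:stationary1}. Since $n \geq n^*$ guarantees, by \cref{thm:convexity1}, that $f$ is $\mu$-strongly convex, it has a unique stationary point; hence $\bar\theta^n = \theta^n_*$.

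If one prefers a derivation that does not route through strong convexity for uniqueness, the same conclusion follows directly from $\nabla_\theta f(\bar\theta^n) = 0$ by peeling the matrix factors off \eqref{eq:gradient1}. Because $n \geq n^*$ makes $\Pi T^n$ a contraction (\cref{thm:contraction1}), \cref{thm:contraction-property1} says $\Phi^T D^\beta(I - \gamma^n(P^\pi)^n)\Phi$, and hence (up to sign and transposition) $\Phi^T(\gamma^n(P^\pi)^n - I)^T D^\beta \Phi$, is an invertible $m \times m$ matrix, and $(\Phi^T D^\beta\Phi)^{-1}$ is invertible as well. Left-multiplying $\nabla_\theta f(\bar\theta^n) = 0$ by the inverses of these two factors leaves $\Phi^T D^\beta\bigl(\Pi T^n(\Phi\bar\theta^n) - \Phi\bar\theta^n\bigr) = 0$. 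Now $\Pi T^n(\Phi\bar\theta^n) - \Phi\bar\theta^n \in {\cal R}(\Phi)$, so writing it as $\Phi u$ and using $\Phi^T D^\beta\Phi \succ 0$ forces $u = 0$; thus $\Phi\bar\theta^n = \Pi T^n(\Phi\bar\theta^n)$, i.e. $\bar\theta^n$ solves the $n$-PBE \eqref{eq:1}, whose unique solution is $\theta^n_*$. Full column rank of $\Phi$ then gives $\bar\theta^n = \theta^n_*$.

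I do not expect a genuine obstacle here: once \cref{thm:contraction-property1} is available, the statement is a short linear-algebra bookkeeping argument. The only point needing a little care is the cancellation of the residual $\Pi T^n(\Phi\bar\theta^n) - \Phi\bar\theta^n$ against the positive-definite matrix $\Phi^T D^\beta\Phi$; this is legitimate precisely because the residual lies in ${\cal R}(\Phi)$ (both terms do), so injectivity of $u \mapsto \Phi^T D^\beta\Phi u$ suffices. One should also record the converse inclusion — that $\theta^n_*$ is indeed a stationary point — so that the two objects are shown to coincide rather than merely one being contained in the set of the other.
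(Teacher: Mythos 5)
Your proposal is correct and, in its second derivation, is essentially the paper's own proof: peel the nonsingular factor $\Phi^T(\gamma^n(P^\pi)^n - I)^T D^\beta\Phi$ (via \cref{thm:contraction-property1}) and the invertible $(\Phi^T D^\beta\Phi)^{-1}$ off the gradient, then use that the residual $\Pi T^n(\Phi\bar\theta^n) - \Phi\bar\theta^n$ lies in ${\cal R}(\Phi)$ to conclude it vanishes (the paper phrases this as $\Pi$ acting as the identity on the residual, you phrase it via injectivity of $\Phi^T D^\beta\Phi$ — the same fact). Your first route through strong convexity is a harmless variant; both are fine.
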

\begin{proof}
The stationary point satisfies
\begin{align*}
{\nabla _\theta }f({{\bar \theta }^n}) =& {\Phi ^T}{({\gamma ^n}{({P^\pi })^n} - I)^T}D^\beta \Phi {({\Phi ^T}D^\beta \Phi )^{ - 1}}\\
&\times {\Phi ^T}D^\beta (\Pi {T^n}(\Phi {{\bar \theta }^n}) - \Phi {{\bar \theta }^n})\\
 =& 0.
\end{align*}
Since ${\Phi ^T}{({\gamma ^n}{({P^\pi })^n} - I)^T}D^\beta \Phi$ is nonsingular by~\cref{thm:contraction-property1}, the above equality is equivalent to
\begin{align*}
{({\Phi ^T}D^\beta \Phi )^{ - 1}}{\Phi ^T}D^\beta (\Pi {T^n}(\Phi {{\bar \theta }^n}) - \Phi {{\bar \theta }^n}) = 0.
\end{align*}
Next, multiplying both sides of the above equality by $\Phi$ leads to $\Pi (\Pi {T^n}(\Phi {{\bar \theta }^n}) - \Phi {{\bar \theta }^n}) = \Pi {T^n}(\Phi {{\bar \theta }^n}) - \Phi {{\bar \theta }^n} = 0$, which is the fixed point equation of PBE. This completes the proof.
\end{proof}

\begin{theorem}[Lipschitz continuity]\label{thm:Lipschitz1}
Suppose that $n \geq n^*$ so that $\Pi {T^n}$ is a contraction. Then, the gradient in~\eqref{eq:gradient1} is $L$-Lipschitz continuous, i.e.,
\begin{align*}
{\left\| {{\nabla _\theta }f(x) - {\nabla _\theta }f(y)} \right\|_2} \le L{\left\| {x - y} \right\|_2},
\end{align*}
where $L = {\left\| \Gamma  \right\|_2}$.
\end{theorem}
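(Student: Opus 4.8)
The plan is to exploit the fact that $f$ in~\eqref{eq:objective1} is a quadratic function of $\theta$, so its gradient is an affine map whose linear part is exactly the constant Hessian $\Gamma$ defined right after~\eqref{eq:gradient1}. Concretely, since $T^n(\Phi\theta) = \sum_{k=0}^{n-1}\gamma^k(P^\pi)^k R^\pi + \gamma^n(P^\pi)^n\Phi\theta$, the residual satisfies
\begin{align*}
\Pi T^n(\Phi\theta) - \Phi\theta = (\gamma^n\Pi(P^\pi)^n - I)\Phi\theta + \Pi\sum_{k=0}^{n-1}\gamma^k(P^\pi)^k R^\pi,
\end{align*}
which is affine in $\theta$. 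Substituting this into~\eqref{eq:gradient1} and using the identity $\Phi^T D^\beta\Pi = \Phi^T D^\beta$ (immediate from $\Pi = \Phi(\Phi^T D^\beta\Phi)^{-1}\Phi^T D^\beta$) to collapse the inner projection, I would verify that the terms linear in $\theta$ reassemble into precisely $\Gamma$, yielding $\nabla_\theta f(\theta) = \Gamma\theta + b$ for some constant vector $b\in{\mathbb R}^m$. Equivalently, one may simply invoke that the gradient of a quadratic form equals its Hessian applied to the argument plus a constant.

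Given this representation, the conclusion is immediate: for any $x,y\in{\mathbb R}^m$ we have $\nabla_\theta f(x) - \nabla_\theta f(y) = \Gamma(x-y)$, hence
\begin{align*}
{\left\| {\nabla_\theta f(x) - \nabla_\theta f(y)} \right\|_2} = {\left\| {\Gamma(x-y)} \right\|_2} \le {\left\| \Gamma \right\|_2}{\left\| {x-y} \right\|_2},
\end{align*}
which is the claim with $L = {\left\| \Gamma \right\|_2}$. Since $\Gamma$ is symmetric with $\Gamma\succeq 0$, one may additionally note ${\left\| \Gamma \right\|_2} = \lambda_{\max}(\Gamma)$.

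There is essentially no obstacle here; the statement is structural. The only point requiring minor care is the algebraic bookkeeping in the first step — checking that the $\Pi$ inside $\Pi T^n$ cancels correctly against the trailing $\Phi^T D^\beta$ factor in~\eqref{eq:gradient1}, so that the linear part of $\nabla_\theta f$ is exactly $\Gamma$ rather than some related but different matrix. I also remark that the hypothesis $n\geq n^*$ is not actually needed: Lipschitz continuity of $\nabla_\theta f$ with constant ${\left\| \Gamma \right\|_2}$ holds for every $n\geq 1$, the contraction hypothesis being retained here only for consistency with the surrounding section.
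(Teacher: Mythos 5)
Your proposal is correct and follows essentially the same route as the paper, which simply asserts that the bound follows from ``simple calculations'' on the gradient in~\eqref{eq:gradient1}; you have filled in exactly those calculations by noting that $\nabla_\theta f(\theta)=\Gamma\theta+b$ is affine with linear part equal to the constant Hessian $\Gamma$, whence $\|\nabla_\theta f(x)-\nabla_\theta f(y)\|_2=\|\Gamma(x-y)\|_2\le\|\Gamma\|_2\|x-y\|_2$. Your side remarks --- that $\Phi^T D^\beta \Pi=\Phi^T D^\beta$ collapses the inner projection so the linear part is precisely $\Gamma$, and that the hypothesis $n\ge n^*$ is not actually needed for this statement --- are both accurate.
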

\begin{proof}
Noting the gradient in~\eqref{eq:gradient1}, simple calculations leads to the desired conclusion.
\end{proof}

\begin{theorem}[{\cite[Thm.~2.1.15]{nesterov2018lectures}}]\label{thm:convergence2}
Suppose that $n \geq n^*$ so that $\Pi {T^n}$ is a contraction.
The gradient descent algorithm in~\eqref{eq:GD-algo1} with the constant step-size $\alpha  = \frac{2}{{\mu  + L}}$ satisfies
\begin{align*}
\left\| {{\theta _k} - \theta _*^n} \right\|_2^2 \le {\left( {\frac{{L - \mu }}{{L + \mu }}} \right)^{2k}}\left\| {{\theta _0} - \theta _*^n} \right\|_2^2,
\end{align*}
where $c$ and $L$ such that $L \ge \mu$ are defined in~\cref{thm:convexity1} and~\cref{thm:Lipschitz1}, respectively.
\end{theorem}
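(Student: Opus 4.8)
The plan is to treat this as a direct application of the classical linear-convergence guarantee for gradient descent on a smooth, strongly convex objective, since the preceding theorems have already supplied every structural ingredient. First I would assemble the relevant facts: because $n \ge n^*$, \cref{thm:contraction-property1} gives $\Gamma \succ 0$, so by \cref{thm:convexity1} the objective $f$ in~\eqref{eq:objective1} is $\mu$-strongly convex with $\mu = \lambda_{\min}(\Gamma) > 0$, and by \cref{thm:Lipschitz1} its gradient is $L$-Lipschitz with $L = \|\Gamma\|_2 \ge \mu$. I would also recall from the Stationary point theorem that the unique minimizer of $f$ is exactly $\theta_*^n$, the fixed point of the $n$-PBE in~\eqref{eq:1}, so that $\nabla_\theta f(\theta_*^n) = 0$.

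With these in place, I would invoke \cite[Thm.~2.1.15]{nesterov2018lectures}: for any $\mu$-strongly convex function with $L$-Lipschitz gradient, the gradient descent iteration with constant step size $\alpha = 2/(\mu+L)$ contracts the Euclidean distance to the unique minimizer by the factor $(L-\mu)/(L+\mu)$ at each iteration. Instantiating the minimizer as $\theta_*^n$ and iterating this bound $k$ times yields the stated squared-distance estimate.

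Should a self-contained argument be preferred instead of a citation, the one substantive step is the co-coercivity inequality $\langle \nabla_\theta f(x) - \nabla_\theta f(y),\, x - y\rangle \ge \tfrac{\mu L}{\mu+L}\|x-y\|_2^2 + \tfrac{1}{\mu+L}\|\nabla_\theta f(x) - \nabla_\theta f(y)\|_2^2$, which holds for any $\mu$-strongly convex $f$ with $L$-Lipschitz gradient. Expanding $\|\theta_{k+1} - \theta_*^n\|_2^2 = \|\theta_k - \theta_*^n\|_2^2 - 2\alpha\langle \nabla_\theta f(\theta_k),\, \theta_k - \theta_*^n\rangle + \alpha^2\|\nabla_\theta f(\theta_k)\|_2^2$, applying this inequality with $x = \theta_k$ and $y = \theta_*^n$, and using $\nabla_\theta f(\theta_*^n) = 0$, the choice $\alpha = 2/(\mu+L)$ annihilates the $\|\nabla_\theta f(\theta_k)\|_2^2$ term and leaves the factor $1 - \tfrac{4\mu L}{(\mu+L)^2} = \big(\tfrac{L-\mu}{L+\mu}\big)^2$ multiplying $\|\theta_k - \theta_*^n\|_2^2$; a trivial induction on $k$ then finishes the proof.

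I do not anticipate a genuine obstacle: the deadly-triad difficulty was already dissolved by proving strong convexity, which is precisely what forced the requirement $n \ge n^*$, and what remains is a textbook optimization lemma. The only point needing a word of care is that $\mu > 0$ and $L \ge \mu$ are guaranteed only because $n \ge n^*$ renders $\Gamma$ positive definite; without that hypothesis the step size $2/(\mu+L)$ and the contraction factor $(L-\mu)/(L+\mu)$ would be ill-posed or equal to one, and the estimate would be vacuous.
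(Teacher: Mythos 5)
Your proposal is correct and follows essentially the same route as the paper, which likewise proves this result by direct appeal to \cite[Thm.~2.1.15]{nesterov2018lectures} after the strong convexity and Lipschitz constants have been established in \cref{thm:convexity1} and \cref{thm:Lipschitz1}; your added co-coercivity derivation is just the standard internal proof of that cited theorem. The only substantive point you add beyond the paper's one-line proof is the (correct and worthwhile) observation that the stationary-point theorem is needed to identify the minimizer of~\eqref{eq:objective1} with $\theta_*^n$ so that the bound can be stated in terms of the fixed point of the $n$-PBE.
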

\begin{proof}
The proof follows that of~\cite[Thm.~2.1.15]{nesterov2018lectures}.
\end{proof}

Note that the parameters $c$ and $L$, defined in~\cref{thm:convexity1} and~\cref{thm:Lipschitz1}, respectively, are not unique. Therefore, one can adjust them so that the condition $L \ge \mu$ is satisfied.

\section{Gradient operator~II}\label{sec:gradient2}

In this section, let us consider the different objective function
\begin{align}
f(\theta ) = \frac{1}{2}\left\| {{\Phi ^T}{D^\beta}({T^n}(\Phi \theta ) - \Phi \theta )} \right\|_{D^\beta}^2,\label{eq:objective2}
\end{align}
and the corresponding optimization problem
\begin{align}
{\bar \theta ^n}: = \argmin _{\theta  \in {\mathbb R}^m}f(\theta ).\label{eq:opt2}
\end{align}

The objective function in~\eqref{eq:objective2} is different from~\eqref{eq:objective1}, and has been introduced in~\cite{lee2023new}. Similar to MSPBE in~\eqref{eq:objective1},~\eqref{eq:objective2} can be used to derive GTD algorithms.
The corresponding gradient is given by
\begin{align}
{\nabla _\theta }f(\theta )=& {\Phi ^T}{({\gamma ^n}{({P^\pi })^n} - I)^T}{D^\beta }\nonumber\\
& \times \Phi {D^\beta }{\Phi ^T}{D^\beta }({T^n}(\Phi \theta ) - \Phi \theta),\label{eq:gradient2}
\end{align}
and the Hessian is
\begin{align*}
&\nabla _\theta ^2f(\theta )\\
 =& {\Phi ^T}{({\gamma ^n}{({P^\pi })^n} - I)^T}{D^\beta }\Phi {D^\beta }{\Phi ^T}{D^\beta }({\gamma ^n}{({P^\pi })^n} - I)\Phi \\
=& :\Gamma.
\end{align*}

Similar to the previous section, we consider the gradient descent algorithm in~\eqref{eq:GD-algo1}.
Moreover, we investigate properties of the objective function~\eqref{eq:objective2} following similar steps as in the previous section.
\begin{theorem}[Convexity]
For any $n \geq 1$, the objective function~\eqref{eq:objective2} is convex.
\end{theorem}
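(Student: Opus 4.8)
The plan is to observe that $f$ in~\eqref{eq:objective2} is a convex quadratic in $\theta$, so that convexity follows at once from the second-order characterization of convexity; equivalently, $f$ is the composition of a positive semidefinite quadratic form with an affine map of $\theta$. Unlike the strong-convexity statement that follows later, no nonsingularity hypothesis (neither \cref{assumption:2} nor the contraction threshold $n^*$) is needed here, because only positive \emph{semi}definiteness of the Hessian is required; this is why the claim holds for every $n\ge 1$.

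First I would make the dependence on $\theta$ explicit. Setting $w:=R^\pi+\gamma P^\pi R^\pi+\cdots+\gamma^{n-1}(P^\pi)^{n-1}R^\pi$, which is independent of $\theta$, the definition of the $n$-step Bellman operator gives $T^n(\Phi\theta)-\Phi\theta = w + (\gamma^n(P^\pi)^n - I)\Phi\theta$, an affine function of $\theta$. Consequently $\Phi^T D^\beta(T^n(\Phi\theta)-\Phi\theta) = B\theta + c$ with $B:=\Phi^T D^\beta(\gamma^n(P^\pi)^n-I)\Phi$ and $c:=\Phi^T D^\beta w$, and therefore $f(\theta)=\tfrac12\|B\theta+c\|_{D^\beta}^2$.

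From this form the conclusion is immediate in either of two equivalent ways. Differentiating twice gives $\nabla_\theta^2 f(\theta) = B^T D^\beta B = \Gamma$, and for every $z$ one has $z^T\Gamma z = (Bz)^T D^\beta (Bz)=\|Bz\|_{D^\beta}^2\ge 0$ since $D^\beta\succ 0$; hence $\Gamma\succeq 0$ and $f$ is convex. Alternatively, $u\mapsto\tfrac12\|u\|_{D^\beta}^2$ is convex (as $D^\beta\succ 0$) and $\theta\mapsto B\theta+c$ is affine, and a convex function precomposed with an affine map is convex. Either route mirrors exactly the convexity argument used for~\eqref{eq:objective1}.

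There is essentially no obstacle: the proof is routine once the affine structure of $T^n(\Phi\theta)-\Phi\theta$ is recorded, and the only points deserving a careful word are that the quoted Hessian $\Gamma$ is precisely the Gram-type matrix $B^T D^\beta B$ — hence automatically positive semidefinite — and that semidefiniteness, rather than definiteness, is all that convexity demands, so the statement is unconditional in $n$.
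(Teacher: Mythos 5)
Your proof is correct and matches the paper's (largely implicit) argument: the paper simply asserts $\Gamma \succeq 0$, and your observation that $T^n(\Phi\theta)-\Phi\theta$ is affine in $\theta$ so the Hessian is the Gram-type matrix $B^T D^\beta B$ with $D^\beta \succ 0$ is precisely the justification behind that assertion. No gap; the explicit affine decomposition and the remark that only semidefiniteness (hence no nonsingularity hypothesis) is needed are exactly right.
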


Therefore, the gradient descent algorithm in~\eqref{eq:GD-algo1} can find a global optimal solution of~\eqref{eq:opt2}, which is also a stationary point satisfying
\begin{align}
{\nabla _\theta }f({{\bar \theta }^n}) = 0.\label{eq:stationary2}
\end{align}

However, when $n=1$, without~\cref{assumption:2}, there is no guarantee that the stationary point satisfying~\eqref{eq:stationary2} is identical to the unique fixed point, $\theta^n_*$, of $n$-PBE in~\eqref{eq:1}.
To overcome this issue, we can consider larger $n$.
In particular, when $n$ is sufficiently large, the objective function~\eqref{eq:objective2} is strongly convex.
\begin{theorem}[Strong convexity]\label{thm:convexity2}
If $n \geq n^*$ so that $\Pi {T^n}$ is a contraction, then the objective function~\eqref{eq:objective2} is $\mu$-strongly convex with $\mu = {\lambda _{\min }}(\Gamma )$.
\end{theorem}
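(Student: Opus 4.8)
The plan is to follow the same route as the proof of \cref{thm:convexity1}. Since the objective in \eqref{eq:objective2} is a quadratic function of $\theta$, its Hessian is the constant matrix $\Gamma$ displayed just above the theorem; hence it suffices to prove $\Gamma \succ 0$, and then \cite[Thm.~2.1.11]{nesterov2018lectures} immediately yields that $f$ is $\mu$-strongly convex with $\mu = \lambda_{\min}(\Gamma) > 0$.

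To establish $\Gamma \succ 0$, I would write $\Gamma$ in the sandwiched form $\Gamma = C^T W C$, where $C := \Phi^T D^\beta(\gamma^n(P^\pi)^n - I)\Phi$ is the ``essential'' factor and $W$ collects the remaining inner matrices built from $D^\beta$ and $\Phi$. Two facts are then needed. First, $W \succ 0$: this follows from $D^\beta \succ 0$ together with the standing full-column-rank assumption on $\Phi$, so that no nontrivial vector is annihilated by the inner factor. Second, $C$ is nonsingular: here is where the hypothesis $n \geq n^*$ enters, since it makes $\Pi T^n$ a contraction, so \cref{thm:contraction-property1} applies and gives nonsingularity of $\Phi^T D^\beta(I - \gamma^n(P^\pi)^n)\Phi$, i.e.\ of $C$. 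Combining the two, for every $x \neq 0$ we get $Cx \neq 0$, hence $x^T \Gamma x = (Cx)^T W (Cx) > 0$; thus $\Gamma \succ 0$ and $\lambda_{\min}(\Gamma) > 0$, which is the claimed modulus.

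The only non-mechanical ingredient is the nonsingularity of $C$, and that is precisely what the contraction hypothesis buys us through \cref{thm:contraction-property1}; computing the (constant) Hessian, checking that the middle factor $W$ is positive definite, and reading the strong-convexity modulus off $\lambda_{\min}(\Gamma)$ are all routine. As a remark, mirroring the discussion following \cref{thm:convexity1}, when $n = 1$ and \cref{assumption:2} fails the matrix $C$ may be singular, so $\Gamma$ is then merely positive semidefinite and strong convexity can genuinely fail; this is exactly why enlarging $n$ beyond $n^*$ is essential.
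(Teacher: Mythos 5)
Your proposal is correct and follows essentially the same route as the paper's proof: both factor the constant Hessian as $\Gamma = C^T W C$ with $C = \Phi^T D^\beta(\gamma^n (P^\pi)^n - I)\Phi$, invoke \cref{thm:contraction-property1} (via the contraction hypothesis $n \geq n^*$) for the nonsingularity of $C$, use positive definiteness of the middle weighting matrix to conclude $\Gamma \succ 0$, and read off $\mu = \lambda_{\min}(\Gamma)$ from \cite[Thm.~2.1.11]{nesterov2018lectures}. No substantive difference.
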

\begin{proof}
First of all, $\Gamma \succeq 0$ in general, and hence, \eqref{eq:objective1} is convex.
By~\cref{thm:contraction-property1}, ${\Phi ^T}{D^\beta }({\gamma ^n}{({P^\pi })^n} - I)\Phi $ is nonsingular.
Then, the Hessian of~\eqref{eq:objective2} is positive definite because ${D^\beta } \succ 0 \Leftrightarrow {\Phi ^T}{({\gamma ^n}{({P^\pi })^n} - I)^T}{D^\beta }\Phi {D^\beta }{\Phi ^T}{D^\beta }({\gamma ^n}{({P^\pi })^n} - I)\Phi  = \Gamma  \succ 0$. Therefore,~\eqref{eq:objective1} is strongly convex, and the coefficient is given by $\mu = {\lambda _{\min }}(\Gamma )$~\cite[Thm.~2.1.11]{nesterov2018lectures}. This completes the proof.
\end{proof}

When strongly convex, the gradient descent algorithm in~\eqref{eq:GD-algo1} can find the unique global optimal solution $\bar \theta ^n$ of~\eqref{eq:opt2} with linear convergence rate, which also a stationary point satisfying~\eqref{eq:stationary2}.
Moreover, when strongly convex, the unique global optimal solution of~\eqref{eq:opt2} satisfying~\eqref{eq:stationary2} is identical to the unique fixed point, $\theta^n_*$, of $n$-PBE in~\eqref{eq:1} without~\cref{assumption:2}.
\begin{theorem}[Stationary point]
Suppose that $n \geq n^*$ so that $\Pi {T^n}$ is a contraction. Then, the unique stationary point, ${\bar \theta }^n$, satisfying~\eqref{eq:stationary2}, is the unique fixed point, $\theta^n_*$, of $n$-PBE in~\eqref{eq:1}.
\end{theorem}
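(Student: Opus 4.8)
The plan is to mirror the argument already used for the stationary point of objective~\eqref{eq:objective1}: reduce the stationarity condition~\eqref{eq:stationary2} to the $n$-PBE~\eqref{eq:1} by cancelling the invertible matrix factors that multiply the Bellman residual on the left, and then invoke uniqueness of the fixed point of $\Pi T^n$.

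First I would write out~\eqref{eq:stationary2} explicitly using the gradient formula~\eqref{eq:gradient2}, so that $\bar\theta^n$ satisfies
\[
{\Phi ^T}{({\gamma ^n}{({P^\pi })^n} - I)^T}{D^\beta }\Phi\, {D^\beta }\,{\Phi ^T}{D^\beta }\big({T^n}(\Phi \bar\theta^n ) - \Phi \bar\theta^n \big) = 0.
\]
The key observation is that every matrix standing to the left of ${\Phi ^T}{D^\beta }({T^n}(\Phi \bar\theta^n ) - \Phi \bar\theta^n )$ is nonsingular: ${\Phi ^T}{({\gamma ^n}{({P^\pi })^n} - I)^T}{D^\beta }\Phi$ is the transpose of ${\Phi ^T}{D^\beta }({\gamma ^n}{({P^\pi })^n} - I)\Phi$, which is nonsingular by~\cref{thm:contraction-property1} since $n \ge n^*$, while $D^\beta \succ 0$ by construction and ${\Phi ^T}{D^\beta }\Phi \succ 0$ because $\Phi$ has full column rank. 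Multiplying through by the inverses of these factors collapses the stationarity condition to
\[
{\Phi ^T}{D^\beta }\big({T^n}(\Phi \bar\theta^n ) - \Phi \bar\theta^n \big) = 0.
\]

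Next I would pre-multiply by $\Phi({\Phi ^T}{D^\beta }\Phi)^{-1}$ to obtain $\Pi\big({T^n}(\Phi \bar\theta^n ) - \Phi \bar\theta^n \big) = 0$; since $\Phi \bar\theta^n \in {\cal R}(\Phi)$ we have $\Pi \Phi \bar\theta^n = \Phi \bar\theta^n$, so this reads $\Pi {T^n}(\Phi \bar\theta^n ) = \Phi \bar\theta^n$, which is precisely the $n$-PBE fixed-point equation~\eqref{eq:1}. Because $n \ge n^*$, $\Pi {T^n}$ is a contraction with respect to ${\left\|\cdot\right\|_\infty}$ and hence has a unique fixed point $\theta^n_*$; therefore $\Phi \bar\theta^n = \Phi \theta^n_*$, and the full column rank of $\Phi$ yields $\bar\theta^n = \theta^n_*$. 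Conversely, reversing the chain of equalities shows that $\theta^n_*$ indeed satisfies~\eqref{eq:stationary2}, so the stationary point is unique and equals $\theta^n_*$.

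The argument is essentially bookkeeping; the only real content is the invertibility of the left factors, and this is exactly where~\cref{thm:contraction-property1} is indispensable — without the contraction hypothesis (e.g.\ at $n=1$ without~\cref{assumption:2}), ${\Phi ^T}{({\gamma ^n}{({P^\pi })^n} - I)^T}{D^\beta }\Phi$ may be singular and the reduction to~\eqref{eq:1} breaks down, which is precisely the gap that motivates taking $n$ large. I anticipate no substantive obstacle beyond carefully tracking which factors are genuinely invertible.
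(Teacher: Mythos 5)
Your proposal is correct and follows essentially the same route as the paper's own proof: cancel the nonsingular left factors (invoking \cref{thm:contraction-property1} for ${\Phi ^T}{({\gamma ^n}{({P^\pi })^n} - I)^T}{D^\beta }\Phi$), reduce to ${\Phi ^T}{D^\beta }({T^n}(\Phi \bar\theta^n ) - \Phi \bar\theta^n ) = 0$, and pre-multiply by $\Phi({\Phi ^T}{D^\beta }\Phi)^{-1}$ to recover the $n$-PBE. Your version is slightly more careful in spelling out the transpose argument, the uniqueness step via full column rank of $\Phi$, and the converse direction, but these are refinements of the same argument rather than a different approach.
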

\begin{proof}
The stationary point satisfies
\begin{align*}
{\nabla _\theta }f({{\bar \theta }^n}) =& {\Phi ^T}{({\gamma ^n}{({P^\pi })^n} - I)^T}{D^\beta }\Phi {D^\beta }{\Phi ^T}{D^\beta }\\
&\times ({T^n}(\Phi {{\bar \theta }^n}) - \Phi {{\bar \theta }^n})\\
=& 0.
\end{align*}

Since ${\Phi ^T}{({\gamma ^n}{({P^\pi })^n} - I)^T}D^\beta \Phi$ is nonsingular by~\cref{thm:contraction-property1}, the above equality is equivalent to
\begin{align*}
{\Phi ^T}{D^\beta }({T^n}(\Phi {{\bar \theta }^n}) - \Phi {{\bar \theta }^n}) = 0.
\end{align*}
Next, multiplying both sides of the above equality by $\Phi {({\Phi ^T}{D^\beta }\Phi )^{ - 1}}$ leads to $\Pi (\Pi {T^n}(\Phi {{\bar \theta }^n}) - \Phi {{\bar \theta }^n}) = \Pi {T^n}(\Phi {{\bar \theta }^n}) - \Phi {{\bar \theta }^n} = 0$, which is the fixed point equation of $n$-PBE. This completes the proof.
\end{proof}

\begin{theorem}[Lipschitz continuity]\label{thm:Lipschitz2}
Suppose that $n \geq n^*$ so that $\Pi {T^n}$ is a contraction. Then, the gradient in~\eqref{eq:gradient2} is $L$-Lipschitz continuous, i.e.,
\begin{align*}
{\left\| {{\nabla _\theta }f(x) - {\nabla _\theta }f(y)} \right\|_2} \le L{\left\| {x - y} \right\|_2},
\end{align*}
where $L = {\left\| \Gamma  \right\|_2}$.
\end{theorem}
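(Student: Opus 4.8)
The plan is to exploit the fact that the gradient in~\eqref{eq:gradient2} is an \emph{affine} function of $\theta$, so that its increments are governed exactly by the constant Hessian $\Gamma$. First I would expand the $n$-step Bellman residual appearing inside~\eqref{eq:gradient2}: since
\[
{T^n}(\Phi \theta ) - \Phi \theta = \left( {{R^\pi } + \gamma {P^\pi }{R^\pi } +  \cdots  + {\gamma ^{n - 1}}{{({P^\pi })}^{n - 1}}{R^\pi }} \right) + ({\gamma ^n}{({P^\pi })^n} - I)\Phi \theta,
\]
the parenthesized reward term is independent of $\theta$ and the remaining part is linear in $\theta$. Substituting this decomposition into~\eqref{eq:gradient2} and collecting terms shows that $\nabla_\theta f(\theta) = \Gamma \theta + b$ for the fixed matrix $\Gamma = {\Phi ^T}{({\gamma ^n}{({P^\pi })^n} - I)^T}{D^\beta }\Phi {D^\beta }{\Phi ^T}{D^\beta }({\gamma ^n}{({P^\pi })^n} - I)\Phi$ and some fixed vector $b \in {\mathbb R}^m$ whose precise form is irrelevant; this matrix $\Gamma$ is exactly the Hessian displayed after~\eqref{eq:gradient2}.

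Next, for any $x,y \in {\mathbb R}^m$ I would simply subtract the affine expressions to obtain $\nabla_\theta f(x) - \nabla_\theta f(y) = \Gamma(x-y)$, and then take Euclidean norms and invoke the definition of the induced $\ell_2$ operator norm:
\[
{\left\| {{\nabla _\theta }f(x) - {\nabla _\theta }f(y)} \right\|_2} = {\left\| {\Gamma (x - y)} \right\|_2} \le {\left\| \Gamma  \right\|_2}{\left\| {x - y} \right\|_2},
\]
which is the claimed bound with $L = {\left\| \Gamma  \right\|_2}$. Note that the hypothesis $n \ge n^*$ is not actually required for Lipschitz continuity itself; it only guarantees (via~\cref{thm:contraction-property1}) that $\Gamma \succ 0$, so that $L$ is consistent with the strong-convexity constant $\mu = {\lambda _{\min }}(\Gamma)$ of~\cref{thm:convexity2}. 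I would keep it in the statement for uniformity with the rest of the section.

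There is essentially no obstacle here. The only point requiring a little care is confirming that, after the substitution above, the $\theta$-linear part of~\eqref{eq:gradient2} matches the matrix $\Gamma$ exactly, with no dropped factor of $D^\beta$ or missing transpose; but this is the same routine computation that produced the Hessian formula earlier, so it can be cited rather than repeated. An equally short alternative route is to observe directly that $f$ in~\eqref{eq:objective2} is a quadratic form in $\theta$ with constant Hessian $\Gamma$ (being $\tfrac12 g(\theta)^T D^\beta g(\theta)$ with $g$ affine), whence $\nabla_\theta f$ is affine with linear part $\Gamma$ and the Lipschitz estimate follows immediately.
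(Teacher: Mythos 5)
Your proposal is correct and matches the paper's (one-line) proof, which simply appeals to the ``simple calculation'' that the gradient in~\eqref{eq:gradient2} is affine in $\theta$ with linear part equal to the constant Hessian $\Gamma$, so that $\nabla_\theta f(x)-\nabla_\theta f(y)=\Gamma(x-y)$ and the induced $\ell_2$ operator norm gives $L=\left\|\Gamma\right\|_2$. Your side remark that the hypothesis $n\ge n^*$ is not needed for the Lipschitz bound itself is also accurate.
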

\begin{proof}
Noting the gradient in~\eqref{eq:gradient2}, simple calculations lead to the desired conclusion.
\end{proof}

\begin{theorem}[{\cite[Thm.~2.1.15]{nesterov2018lectures}}]\label{thm:convergence4}
Suppose that $n \geq n^*$ so that $\Pi {T^n}$ is a contraction.
The gradient descent algorithm in~\eqref{eq:GD-algo1} with the constant step-size $\alpha  = \frac{2}{{\mu  + L}}$ satisfies
\begin{align*}
\left\| {{\theta _k} - \theta _*^n} \right\|_2^2 \le {\left( {\frac{{L - \mu }}{{L + \mu }}} \right)^{2k}}\left\| {{\theta _0} - \theta _*^n} \right\|_2^2,
\end{align*}
where $\mu$ and $L$ such that $L \ge \mu$ are defined in~\cref{thm:convexity2} and~\cref{thm:Lipschitz2}, respectively.
\end{theorem}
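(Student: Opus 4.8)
The plan is to reduce the claim to the classical linear-convergence estimate for gradient descent on a smooth, strongly convex function — namely \cite[Thm.~2.1.15]{nesterov2018lectures} — and verify that its hypotheses hold in our setting. First I would record the two ingredients already established in this section: for $n \geq n^*$, \cref{thm:convexity2} gives that $f$ in \eqref{eq:objective2} is $\mu$-strongly convex with $\mu = \lambda_{\min}(\Gamma) > 0$, and \cref{thm:Lipschitz2} gives that $\nabla_\theta f$ is $L$-Lipschitz with $L = \|\Gamma\|_2$. Since $\Gamma$ is symmetric positive semidefinite in general (and positive definite under the contraction hypothesis by \cref{thm:contraction-property1}), we have $\|\Gamma\|_2 = \lambda_{\max}(\Gamma)$, so the spectral comparison $\mu = \lambda_{\min}(\Gamma) \leq \lambda_{\max}(\Gamma) = L$ makes the requirement $L \geq \mu$ automatic; consequently the step size $\alpha = \tfrac{2}{\mu+L}$ is exactly the choice for which Nesterov's bound is stated (if one prefers, $L$ may be enlarged or $\mu$ shrunk without affecting validity).

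Next I would pin down the limit point. Strong convexity implies $f$ has a unique minimizer, which is its unique stationary point; by the stationary-point theorem of this section that stationary point equals $\theta_*^n$, the fixed point of the $n$-PBE \eqref{eq:1}. Hence $\theta_*^n = \argmin_\theta f(\theta)$ and $\nabla_\theta f(\theta_*^n) = 0$, so the quantity $\|\theta_k - \theta_*^n\|_2$ appearing in the statement is precisely the distance-to-minimizer that Nesterov's theorem controls. Invoking \cite[Thm.~2.1.15]{nesterov2018lectures} then yields $\|\theta_k - \theta_*^n\|_2^2 \leq \bigl(\tfrac{L-\mu}{L+\mu}\bigr)^{2k}\|\theta_0 - \theta_*^n\|_2^2$ directly.

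If a self-contained argument is preferred, I would reproduce the short induction behind that theorem: for a $\mu$-strongly convex function with $L$-Lipschitz gradient one has the strengthened monotonicity inequality $\langle \nabla_\theta f(x) - \nabla_\theta f(y),\, x - y\rangle \geq \tfrac{\mu L}{\mu+L}\|x-y\|_2^2 + \tfrac{1}{\mu+L}\|\nabla_\theta f(x) - \nabla_\theta f(y)\|_2^2$; expanding $\|\theta_{k+1}-\theta_*^n\|_2^2 = \|\theta_k - \theta_*^n\|_2^2 - 2\alpha\langle \nabla_\theta f(\theta_k) - \nabla_\theta f(\theta_*^n),\, \theta_k - \theta_*^n\rangle + \alpha^2\|\nabla_\theta f(\theta_k)\|_2^2$, using $\nabla_\theta f(\theta_*^n)=0$, and substituting $\alpha = \tfrac{2}{\mu+L}$ makes the gradient-norm terms cancel and collapses the coefficient of $\|\theta_k - \theta_*^n\|_2^2$ to $\bigl(\tfrac{L-\mu}{L+\mu}\bigr)^2$, after which iterating over $k$ finishes the proof. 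I do not anticipate any genuine obstacle here: the mathematical content sits entirely in \cref{thm:convexity2} and \cref{thm:Lipschitz2}, and the only point worth stating explicitly is the spectral comparison $\lambda_{\min}(\Gamma) \leq \lambda_{\max}(\Gamma)$ that legitimizes $L \geq \mu$ and the prescribed step size.
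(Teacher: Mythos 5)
Your proposal is correct and follows essentially the same route as the paper, which simply invokes \cite[Thm.~2.1.15]{nesterov2018lectures} after the groundwork of \cref{thm:convexity2} and \cref{thm:Lipschitz2}. Your additional observations --- that $L \ge \mu$ is automatic from $\lambda_{\min}(\Gamma) \le \lambda_{\max}(\Gamma) = \|\Gamma\|_2$ and that the minimizer is identified with $\theta_*^n$ via the stationary-point theorem --- are details the paper leaves implicit but do not change the approach.
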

\begin{proof}
The proof follows that of~\cite[Thm.~2.1.15]{nesterov2018lectures}.
\end{proof}

\section{System operator}\label{sec:system}
Until now, we have studied model-based approaches, the classical dynamic programming and the gradient-based algorithms, to solve the policy evaluation problem. In this section, we will consider another class of model-based iterative algorithms based on the methods for solving general linear equations~\cite{kelley1995iterative}.
In particular, let us first consider the $n$-PBE again
\begin{align*}
{\Phi ^T}{D^\beta }{T^n}(\Phi \theta ) = {\Phi ^T}{D^\beta }\Phi \theta,
\end{align*}
which can be written as the following linear equation form:
\begin{align*}
&\underbrace {\,{\Phi ^T}{D^\beta }({R^\pi } + \gamma {P^\pi }{R^\pi } +  \cdots  + {\gamma ^{n - 1}}{{({P^\pi })}^{n - 1}}{R^\pi })}_{=:b}\\
=& \underbrace {{\Phi ^T}{D^\beta }(I - {\gamma ^n}{{({P^\pi })}^n})\Phi }_{=:A}\theta.
\end{align*}

We consider a Richardson type iteration~\cite{kelley1995iterative} of the form
\begin{align}
{\theta _{k + 1}} = {\theta _k} + \alpha {\Phi ^T}D^\beta({T^n}(\Phi {\theta _k}) - \Phi {\theta _k}),\label{eq:system1}
\end{align}
where $\alpha>0$ is a step-size. We will call the operator $F(x): = x + \alpha {\Phi ^T}D^\beta ({T^n}(\Phi x) - \Phi x)$ a system operator. Then,~\eqref{eq:system1} can be written as ${\theta _{k + 1}} = F({\theta _k})$. We can prove that the iterate $\theta_k$ converges to $\theta_*^n$ for a sufficiently large $n$ and sufficiently small $\alpha$. This result and related lemmas are given below.
\begin{lemma}\label{lemma:3}
There exists a positive integer $\bar n^*$ such that ${{\Phi ^T}D^\beta (I-{\gamma ^n}({P^\pi })^n)\Phi }$ becomes Hurwitz for any $n\ge \bar n^*$.
\end{lemma}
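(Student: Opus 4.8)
The plan is to treat $A := \Phi^T D^\beta(I - \gamma^n(P^\pi)^n)\Phi$ as a vanishing perturbation of a fixed symmetric positive definite matrix, and then to locate its spectrum through its symmetric part. First I would write
\begin{align*}
A = M - E_n,\qquad M := \Phi^T D^\beta \Phi,\qquad E_n := \gamma^n\,\Phi^T D^\beta (P^\pi)^n \Phi .
\end{align*}
Since $\Phi$ has full column rank and $D^\beta \succ 0$, the matrix $M$ is symmetric positive definite, so $\lambda_{\min}(M)>0$, a quantity independent of $n$. For the perturbation, $P^\pi$ is row-stochastic, hence so is $(P^\pi)^n$; as all of its entries lie in $[0,1]$ and each row sums to one, $\|(P^\pi)^n\|_2 \le \|(P^\pi)^n\|_F \le \sqrt{|{\cal S}|}$. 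Therefore $\|E_n\|_2 \le \gamma^n C$ with the $n$-independent constant $C := \sqrt{|{\cal S}|}\,\|\Phi\|_2^2\,\lambda_{\max}(D^\beta)$, and $\|E_n\|_2 \to 0$ as $n\to\infty$ because $\gamma \in (0,1)$.

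The next step converts this into a statement about eigenvalue location. The subtlety is that in the off-policy case $A$ is \emph{not} symmetric, so one cannot argue directly via $\lambda_{\min}(A)$; instead I would use the symmetric part $A+A^T$. For any real matrix $B$ and any eigenvalue $\lambda$ of $B$ with unit (complex) eigenvector $v$, one has $v^*Bv=\lambda$, hence $\operatorname{Re}(\lambda) = \operatorname{Re}(v^*Bv) = \tfrac12\,v^*(B+B^T)v \ge \tfrac12\,\lambda_{\min}(B+B^T)$. Applying this with $B=A$, and using $\lambda_{\min}(X+Y)\ge\lambda_{\min}(X)-\|Y\|_2$ for symmetric matrices together with $\|E_n+E_n^T\|_2\le 2\|E_n\|_2$,
\begin{align*}
\operatorname{Re}(\lambda)\;\ge\;\tfrac12\,\lambda_{\min}\!\bigl(2M-(E_n+E_n^T)\bigr)\;\ge\;\lambda_{\min}(M)-\|E_n\|_2\;\ge\;\lambda_{\min}(M)-\gamma^n C
\end{align*}
for every eigenvalue $\lambda$ of $A$. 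Taking $\bar n^*$ to be the smallest positive integer with $\gamma^{\bar n^*}C<\lambda_{\min}(M)$ --- for instance $\bar n^* = \max\bigl\{1,\ \lceil \ln(\lambda_{\min}(M)/C)/\ln\gamma\rceil+1\bigr\}$, mirroring $n^*$ in~\cref{thm:contraction1} --- we obtain that for all $n\ge\bar n^*$ every eigenvalue of $A=\Phi^T D^\beta(I-\gamma^n(P^\pi)^n)\Phi$ has strictly positive real part; equivalently $-A$ is Hurwitz, which is the form in which the result is used to analyse the error recursion $\theta_{k+1}-\theta_*^n=(I-\alpha A)(\theta_k-\theta_*^n)$ associated with~\eqref{eq:system1}.

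I expect the non-symmetry to be the only genuine obstacle: without it the claim would follow instantly from $M\succ 0$ and $\|E_n\|_2\to 0$ via $\lambda_{\min}(A)\ge\lambda_{\min}(M)-\|E_n\|_2$, and the field-of-values bound through $A+A^T$ is precisely what repairs the argument in the off-policy regime. A second point worth making explicit in the write-up is that the estimate $\|(P^\pi)^n\|_2\le\sqrt{|{\cal S}|}$ is uniform in $n$, so that the factor $\gamma^n$ alone drives the perturbation to zero; and one may always enlarge $\bar n^*$ to $\max\{\bar n^*, n^*\}$ so that the contraction results of~\cref{thm:contraction1} hold simultaneously on the same range of $n$.
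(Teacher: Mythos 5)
Your proof is correct and follows essentially the same route as the paper: both arguments locate the spectrum of $\Phi^T D^\beta(I-\gamma^n(P^\pi)^n)\Phi$ through its symmetric part $2\Phi^T D^\beta \Phi-(E_n+E_n^T)$, which is definite once the $O(\gamma^n)$ perturbation vanishes; the paper phrases this as a limit-plus-continuity argument, while you make it quantitative with explicit constants and an explicit $\bar n^*$. Your parenthetical sign remark is also on point --- what the symmetric-part argument actually shows is that $\Phi^T D^\beta(\gamma^n(P^\pi)^n-I)\Phi$ is Hurwitz (equivalently, the matrix as written in the lemma has all eigenvalues with positive real part), which is the form needed for the Schur stability of $I+\alpha\,\Phi^T D^\beta(\gamma^n(P^\pi)^n-I)\Phi$ in \cref{thm:convergence3}.
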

\begin{proof}
Since
\begin{align*}
&\mathop {\lim }\limits_{n \to \infty } \left[ {{\Phi ^T}D^\beta ({\gamma ^n}{{({P^\pi })}^n} - I)\Phi  + {\Phi ^T}{{({\gamma ^n}{{({P^\pi })}^n} - I)}^T}D^\beta \Phi } \right]\\
=& - 2{\Phi ^T}D^\beta \Phi  \prec 0,
\end{align*}
by continuity, there exists a positive integer $\bar n^*$ such that
\begin{align*}
{\Phi ^T}D^\beta ({\gamma ^n}{({P^\pi })^n} - I)\Phi  + {\Phi ^T}{({\gamma ^n}{({P^\pi })^n} - I)^T}D^\beta \Phi  \prec 0
\end{align*}
for all $n \geq \bar n^*$. This implies that ${{\Phi ^T}D^\beta (I-{\gamma ^n}({P^\pi })^n)\Phi }$ is Hurwitz stable.
\end{proof}
\begin{lemma}\label{lemma:4}
Suppose that the matrix $B$ is Hurwitz stable. Then, there exists a sufficiently small $\alpha^* >0 $ such that $A = I + \alpha B $ is Schur stable for all $\alpha \le \alpha^*$.
\end{lemma}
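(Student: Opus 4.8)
The plan is to reduce the Schur-stability of $A = I + \alpha B$ to a scalar condition on each eigenvalue of $B$ via the spectral mapping theorem. Since $\sigma(A) = \{\, 1 + \alpha\lambda : \lambda \in \sigma(B)\,\}$, it suffices to exhibit $\alpha^* > 0$ such that $|1 + \alpha\lambda| < 1$ for every $\lambda \in \sigma(B)$ and every $0 < \alpha \le \alpha^*$, because then every eigenvalue of $A$ lies strictly inside the unit disk.

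First I would write each eigenvalue of $B$ as $\lambda = a + jb$ with $a = \mathrm{Re}(\lambda) < 0$ --- this is exactly where Hurwitz stability of $B$ enters. A direct computation gives
\begin{align*}
|1 + \alpha\lambda|^2 = (1+\alpha a)^2 + (\alpha b)^2 = 1 + \alpha\left(2a + \alpha|\lambda|^2\right),
\end{align*}
so that $|1+\alpha\lambda|^2 < 1$ if and only if $2a + \alpha|\lambda|^2 < 0$, i.e. $\alpha < \frac{2|a|}{|\lambda|^2}$; the right-hand side is strictly positive since $a<0$, and it is finite since $|\lambda| \ge |a| > 0$, so there is no division by zero. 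Then I would set
\begin{align*}
\alpha^* := \frac{1}{2}\min_{\lambda \in \sigma(B)} \frac{2\,|\mathrm{Re}(\lambda)|}{|\lambda|^2},
\end{align*}
which is well defined and strictly positive because $B$ has only finitely many eigenvalues, each with negative real part. For any $0 < \alpha \le \alpha^*$ and any $\lambda \in \sigma(B)$ we then have $\alpha < \frac{2|\mathrm{Re}(\lambda)|}{|\lambda|^2}$, hence $|1+\alpha\lambda| < 1$, and therefore $A = I + \alpha B$ is Schur stable, which establishes the claim.

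There is essentially no hard step here; the only points requiring a little care are (i) invoking the spectral mapping theorem correctly for a possibly non-symmetric $B$ with complex eigenvalues, (ii) observing that the threshold $\frac{2|\mathrm{Re}(\lambda)|}{|\lambda|^2}$ is positive and finite, which is guaranteed precisely by $\mathrm{Re}(\lambda)<0$, and (iii) taking the minimum over the finite spectrum so that a single $\alpha^*$ works uniformly, with the factor $\tfrac{1}{2}$ (or any strict reduction) inserted to keep the inequalities strict at $\alpha = \alpha^*$.
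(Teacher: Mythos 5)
Your proof is correct, but it takes a genuinely different route from the paper. You work spectrally: by the spectral mapping theorem $\sigma(I+\alpha B)=\{1+\alpha\lambda:\lambda\in\sigma(B)\}$, and the identity $|1+\alpha\lambda|^2=1+\alpha\left(2\,\mathrm{Re}(\lambda)+\alpha|\lambda|^2\right)$ reduces everything to the scalar threshold $\alpha<2|\mathrm{Re}(\lambda)|/|\lambda|^2$, minimized over the finite spectrum. The paper instead argues via a Lyapunov function: Hurwitz stability of $B$ gives $P\succ 0$ with $B^TP+PB=-I$, and then $A^TPA=P-\alpha I+\alpha^2B^TPB\prec P$ for small $\alpha$, which is the discrete-time Lyapunov inequality certifying Schur stability. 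Your argument is more elementary and has the advantage of producing an explicit, essentially tight bound $\alpha^*$ in terms of the eigenvalues of $B$ (the paper's $\alpha^*$ is only implicit, depending on $\|B^TPB\|$); the Lyapunov route avoids any eigenvalue computation, produces a quadratic certificate $P$ that is often reused downstream (e.g., in the stochastic-approximation convergence arguments of the later sections), and is more robust to perturbations of $B$. One small point in your favor: you explicitly note that $|\lambda|\ge|\mathrm{Re}(\lambda)|>0$ rules out division by zero, and you correctly read the conclusion as holding for $0<\alpha\le\alpha^*$ (at $\alpha=0$ the matrix $A=I$ is of course not Schur), a restriction the paper leaves implicit.
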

\begin{proof}
If $B$ is Hurwitz stable, then by the Lyapunov argument, there exists a Lyapunov matrix $P \succ 0$ such that ${B^T}P + PB =  - I$~\cite{chen1995linear}.
Next, with $A = I + \alpha B $, we have
\begin{align*}
{A^T}PA =& {(I + \alpha B)^T}P(I + \alpha B)\\
=& P + \alpha PB + \alpha {B^T}P + {\alpha ^2}{B^T}PB\\
=& P - \alpha I + {\alpha ^2}{B^T}PB.
\end{align*}
Then, it is clear that there exists a sufficiently small $\alpha^* >0 $ such that
\begin{align*}
{A^T}PA = P - \alpha I + {\alpha ^2}{B^T}PB \prec P,
\end{align*}
which implies that $A$ is Schur. This completes the proof.
\end{proof}

Based on the above two results, we are now ready to establish the convergence of the algorithm~\eqref{eq:system1}.
\begin{theorem}(Convergence)\label{thm:convergence3}
There exists a positive integer $\bar n^*$ and a positive real number $\alpha^*$ such that for any $n \geq \bar n^*$ and $\alpha \leq \alpha^*$, the iterate in~\eqref{eq:system1} converges to $\theta_*^n$.
\end{theorem}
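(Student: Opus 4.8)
The plan is to turn~\eqref{eq:system1} into a stationary affine recursion and then read off convergence from \cref{lemma:3,lemma:4}. First I would expand the $n$-step Bellman operator, writing
\begin{align*}
{T^n}(\Phi {\theta _k}) - \Phi {\theta _k} = ({R^\pi } + \gamma {P^\pi }{R^\pi } + \cdots + {\gamma ^{n - 1}}{({P^\pi })^{n - 1}}{R^\pi }) - (I - {\gamma ^n}{({P^\pi })^n})\Phi {\theta _k},
\end{align*}
so that, with $A := {\Phi ^T}{D^\beta }(I - {\gamma ^n}{({P^\pi })^n})\Phi$ and $b := {\Phi ^T}{D^\beta }({R^\pi } + \cdots + {\gamma ^{n - 1}}{({P^\pi })^{n - 1}}{R^\pi })$ as already introduced above~\eqref{eq:system1}, the system operator becomes $F(\theta ) = \theta + \alpha (b - A\theta ) = (I - \alpha A)\theta + \alpha b$ and the iteration reads ${\theta _{k + 1}} = (I - \alpha A){\theta _k} + \alpha b$.

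Next I would identify the target and the sign of the relevant matrix. By \cref{lemma:3}, for every $n \ge \bar n^*$ the matrix $\Phi ^T D^\beta (\gamma^n (P^\pi)^n - I)\Phi = -A$ is Hurwitz stable (indeed the proof of \cref{lemma:3} shows its symmetric part is negative definite); in particular $A$ is nonsingular, so $F$ has the unique fixed point $\theta^* = A^{-1}b$, which by \cref{lemma:1} is exactly $\theta_*^n$ (enlarging $\bar n^*$, if needed, so that $\bar n^* \ge n^*$ and $\theta_*^n$ in the sense of~\eqref{eq:1} is well defined).

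Then I would control the linear part. Applying \cref{lemma:4} with $B := -A$ produces a threshold $\alpha^* > 0$ such that $I - \alpha A = I + \alpha B$ is Schur stable for all $\alpha \le \alpha^*$. Subtracting the fixed point from the recursion gives ${\theta _{k + 1}} - \theta _*^n = (I - \alpha A)({\theta _k} - \theta _*^n)$, hence ${\theta _k} - \theta _*^n = {(I - \alpha A)^k}({\theta _0} - \theta _*^n)$; since $I - \alpha A$ is Schur, its spectral radius is strictly below one, so ${(I - \alpha A)^k} \to 0$ and therefore ${\theta _k} \to \theta _*^n$. (A geometric rate can even be extracted directly from the Lyapunov matrix $P \succ 0$ furnished in the proof of \cref{lemma:4}, since $({\theta _k} - \theta _*^n)^T P ({\theta _k} - \theta _*^n)$ then contracts by a fixed factor per step.)

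The algebra in the first step is routine. The only point needing care is the sign bookkeeping, namely making sure the matrix passed to \cref{lemma:4} is the Hurwitz one ($-A$, equivalently $\Phi^T D^\beta(\gamma^n(P^\pi)^n-I)\Phi$) rather than $A$ itself, and choosing $\bar n^*$ large enough that \cref{lemma:3} applies and that $\theta_*^n$ coincides with the fixed point object in the theorem statement. Beyond that this is the standard convergence proof for a stationary affine iteration with a Schur-stable iteration matrix, so I anticipate no substantive obstacle.
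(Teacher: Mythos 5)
Your proposal is correct and follows essentially the same route as the paper: rewrite~\eqref{eq:system1} as a stationary affine recursion, subtract the fixed point, and invoke \cref{lemma:3} and \cref{lemma:4} to conclude Schur stability of the iteration matrix. Your explicit sign bookkeeping (passing $\Phi^T D^\beta(\gamma^n(P^\pi)^n - I)\Phi$, the matrix whose symmetric part \cref{lemma:3}'s proof actually shows to be negative definite, as the Hurwitz matrix $B$ to \cref{lemma:4}) and your identification of the fixed point with $\theta_*^n$ via \cref{lemma:1} are welcome refinements of details the paper leaves implicit.
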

\begin{proof}
Combining~\eqref{eq:system1} and the fixed point equation in~\eqref{eq:1}, it follows that
\begin{align*}
{\theta _{k + 1}} - \theta _*^n = \underbrace {(I + \alpha {\Phi ^T}D^\beta({\gamma ^n}{{({P^\pi })}^n} - I)\Phi )}_{ = :A}({\theta _k} - \theta _*^n),
\end{align*}
which is a discrete-time linear time-invariant system~\cite{chen1995linear}. Therefore, the convergence of~\eqref{eq:system1} is equivalent to the Schur stability of $A$.
Now, we will prove that $A$ is Schur stable if $n$ is sufficiently large and $\alpha$ is sufficiently small.
First of all, by~\cref{lemma:3}, there exists a positive integer $\bar n^*$ such that ${{\Phi ^T}D^\beta (I-{\gamma ^n}({P^\pi })^n)\Phi }$ becomes a Hurwitz stable matrix for any $n\ge \bar n^*$. Next, by~\cref{lemma:4}, there exists a sufficiently small $\alpha^* >0 $ such that $A$ is Schur stable. This completes the proof.
\end{proof}

In this section, we have proposed a different algorithm in~\eqref{eq:system1} from the classical dynamic programming in~\cref{sec:dynamic-programming} and the gradient descent methods in~\cref{sec:gradient1} and~\cref{sec:gradient2}, and analyzed its convergence based on the control system perspectives~\cite{chen1995linear}. All the iterative algorithms studied until now assume that the model is already known.
In the next section, we will study model-free reinforcement learning algorithms based on these algorithms.

\section{Off-policy multi-step TD-learning based on the system operator}

For convenience, in this paper, we consider the sampling oracle that takes the initial state $s_0$, and generates the sequences of states $({s_1},{s_2}, \ldots ,{s_n})$, actions $({a_0},{a_1}, \ldots ,{a_{n - 1}})$, and rewards $({r_1},{r_2}, \ldots ,{r_{n}})$ following the given constant behavior policy $\beta$.

The iterative algorithm in~\eqref{eq:system1} suggests an off-policy $n$-step TD-learning algorithm ($n$-TD) given in~\cref{algo:TD1}.
Note that~\cref{algo:TD1} can be viewed as a stochastic approximation of~\eqref{eq:system1} by replacing the model parameters by the corresponding samples of the state and action. Moreover, \cref{algo:TD1} can be viewed as a standard off-policy $n$-step TD-learning with the importance sampling method.s

It is also important to note that~\cref{algo:TD1} is introduced solely for conceptual purposes and not as a feasible alternative for practical use because it requires a sampling oracle that can generate the entire i.i.d. samples that are used at each iteration in $n$-TD. However, theoretical studies on~\cref{algo:TD1} may give some insights and help us develop more practical methods.

\begin{algorithm}[h]
\caption{Multi-step off-policy TD-learning}
\begin{algorithmic}[1]

\State Initialize $(\theta _0,\lambda_0 )$.

\For{iteration step $i \in \{0,1,\ldots\}$}

\State Sample $s_0 \sim d^{\beta}$, and sample $({s_1},{s_2}, \ldots ,{s_n})$, $({a_0},{a_1}, \ldots ,{a_{n - 1}})$, and $({r_1},{r_2}, \ldots ,{r_{n}})$ using the sampling oracle.

\State Update parameters according to
\begin{align*}
{\theta _{i + 1}} = {\theta _i} + {\alpha _i}{\rho _{n - 1}}(G - {V_{{\theta _i}}}({s_0})){\varphi(s_0)},
\end{align*}
where $\rho_{n-1} : = \prod\nolimits_{k = 0}^{n - 1} {\frac{{\pi ({a_k}|{s_k})}}{{\beta ({a_k}|{s_k})}}}$ is the importance sampling ratio, $\varphi (s) = {\Phi ^T}{e_s}$ is the $s$-th row vector of $\Phi$, $G = \sum\limits_{k = 0}^{n - 1} {\gamma ^k r_{k+1}}  + {\gamma ^n}{V_{{\theta _i}}}({s_n})$, and ${V_{{\theta _i}}}(s) = e_s^T\Phi {\theta _i}$.

\EndFor
\end{algorithmic}
\label{algo:TD1}
\end{algorithm}

Following the ideas in~\cite{borkar2000ode}, the convergence of~\cref{algo:TD1} can be easily established.
\begin{theorem}\label{thm:convergence5}
Consider~\cref{algo:TD1}, and assume that the step-size satisfy
\begin{align}
&\alpha_k>0,\quad \sum_{k=0}^\infty {\alpha_k}=\infty,\quad \sum_{k=0}^\infty{\alpha_k^2}<\infty.\label{eq:step-size-rule}
\end{align}

Then, $\theta_k \to \theta_*^n$ as $k \to \infty$ with probability one for any $n \ge \bar n^*$, where $\bar n^*$ is given in the statement of~\cref{lemma:3}.
\end{theorem}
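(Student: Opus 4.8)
The plan is to cast \cref{algo:TD1} as a classical stochastic approximation recursion and invoke the ODE method of Borkar and Meyn~\cite{borkar2000ode}. First I would show that the update direction, conditioned on the current iterate $\theta_i$, has the right expectation. Since $s_0 \sim d^\beta$ and the remaining trajectory $({s_1},\ldots,{s_n}),({a_0},\ldots,{a_{n-1}}),({r_1},\ldots,{r_n})$ is drawn from the behavior policy $\beta$, the importance sampling ratio $\rho_{n-1} = \prod_{k=0}^{n-1} \pi(a_k|s_k)/\beta(a_k|s_k)$ corrects the action-selection distribution so that, for any fixed state sequence, ${\mathbb E}_\beta[\rho_{n-1}\,(\cdot)] = {\mathbb E}_\pi[(\cdot)]$. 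Carrying this through, the conditional expectation of the increment is
\begin{align*}
{\mathbb E}[{\rho_{n-1}}(G - V_{\theta_i}(s_0))\varphi(s_0) \mid \theta_i]
= {\Phi^T}{D^\beta}({T^n}(\Phi\theta_i) - \Phi\theta_i),
\end{align*}
which is exactly the deterministic drift term appearing in the Richardson iteration~\eqref{eq:system1}. Thus \cref{algo:TD1} is a stochastic approximation of~\eqref{eq:system1}, and its associated ODE is $\dot\theta = {\Phi^T}{D^\beta}({T^n}(\Phi\theta) - \Phi\theta) = {\Phi^T}{D^\beta}({\gamma^n}({P^\pi})^n - I)\Phi\,\theta + b$, a linear ODE with system matrix $B := {\Phi^T}{D^\beta}({\gamma^n}({P^\pi})^n - I)\Phi$.

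Next I would verify the standard hypotheses of the Borkar--Meyn theorem. The step-size condition~\eqref{eq:step-size-rule} is assumed. The update can be written as $\theta_{i+1} = \theta_i + \alpha_i(h(\theta_i) + M_{i+1})$ where $h$ is the affine map above and $M_{i+1}$ is a martingale-difference noise with respect to the natural filtration; since $\cal S$ and $\cal A$ are finite and $\gamma\in(0,1)$, the rewards and feature vectors are bounded, so $M_{i+1}$ has conditional second moment bounded by an affine function of $\|\theta_i\|^2$, as required. The map $h$ is affine hence globally Lipschitz. For stability (the crucial boundedness condition), I would use the scaled-ODE limit $h_\infty(\theta) := \lim_{c\to\infty} h(c\theta)/c = B\theta$; by \cref{lemma:3}, for any $n \ge \bar n^*$ the matrix $B = {\Phi^T}{D^\beta}(I - {\gamma^n}({P^\pi})^n)\Phi$, taken with the appropriate sign, is Hurwitz, so the scaled ODE $\dot\theta = B\theta$ is globally asymptotically stable at the origin. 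This yields almost-sure boundedness of the iterates. Finally, since $B$ is Hurwitz (for $n \ge \bar n^*$) and nonsingular, the full ODE $\dot\theta = B\theta + b$ has the unique globally asymptotically stable equilibrium $\theta^* = -B^{-1}b$, which by \cref{lemma:1} is precisely $\theta_*^n$. The ODE method then gives $\theta_k \to \theta_*^n$ almost surely.

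The main obstacle is the stability/boundedness step. While the Hurwitz property from \cref{lemma:3} makes the scaled ODE argument clean, one must be careful that the sign conventions line up: \cref{lemma:3} states ${\Phi^T}{D^\beta}(I - {\gamma^n}({P^\pi})^n)\Phi$ is Hurwitz, whereas the drift of~\eqref{eq:system1} involves ${\Phi^T}{D^\beta}({\gamma^n}({P^\pi})^n - I)\Phi = -B'$ with $B' := {\Phi^T}{D^\beta}(I - {\gamma^n}({P^\pi})^n)\Phi$; so the ODE is $\dot\theta = -B'\theta + b$ and one needs $-B'$ Hurwitz, i.e. $B'$ anti-Hurwitz — which is the opposite of what \cref{lemma:3} provides under its literal reading. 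I would therefore restate the stability input as: for $n \ge \bar n^*$, the symmetric part ${\Phi^T}{D^\beta}({\gamma^n}({P^\pi})^n - I)\Phi + {\Phi^T}({\gamma^n}({P^\pi})^n - I)^T{D^\beta}\Phi \prec 0$ (which is what the proof of \cref{lemma:3} actually establishes via the limit $-2{\Phi^T}{D^\beta}\Phi \prec 0$), so that the drift matrix itself is Hurwitz, and then everything goes through. The remaining steps — martingale noise bound, Lipschitz continuity, identification of the limit with $\theta_*^n$ — are routine given the finiteness of the MDP and the earlier lemmas.
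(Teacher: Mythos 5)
Your proposal is correct and follows essentially the same route as the paper: interpret \cref{algo:TD1} as a stochastic approximation of the Richardson iteration~\eqref{eq:system1}, take the linear ODE $\dot\theta = {\Phi^T}{D^\beta}({T^n}(\Phi\theta) - \Phi\theta)$ as the mean dynamics, use \cref{lemma:3} to get Hurwitz stability of the drift matrix, and conclude via the Borkar--Meyn theorem; the paper's proof is just a terser version that omits the importance-sampling expectation computation, the martingale-noise and scaled-ODE verifications, and the identification of the equilibrium with $\theta_*^n$, all of which you supply. Your observation about the sign convention in \cref{lemma:3} (its statement names ${\Phi^T}{D^\beta}(I - {\gamma^n}({P^\pi})^n)\Phi$ while its proof establishes negative definiteness of the symmetric part of ${\Phi^T}{D^\beta}({\gamma^n}({P^\pi})^n - I)\Phi$, which is what stability of the ODE actually requires) is a legitimate and worthwhile correction to the paper's bookkeeping, not a gap in your argument.
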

\begin{proof}
The so-called O.D.E. model of~\cref{algo:TD1} is
\begin{align}
{{\dot \theta }_t} = {\Phi ^T}{D^\beta }({T^n}(\Phi {\theta _t}) - \Phi {\theta _t}). \label{eq:ODE1}
\end{align}
By~\cref{lemma:3}, for $n \ge \bar n^*$, ${{\Phi ^T}D^\beta (I-{\gamma ^n}({P^\pi })^n)\Phi }$ is Hurwitz, and hence,~\eqref{eq:ODE1} is globally asymptotically stable. Then, the proof is completed by using Borkar and Mayen theorem in~\cite[Thm.~2.2]{borkar2000ode}.
\end{proof}

\cref{thm:convergence5} tells us that $n$-TD can solve the policy evaluation problem with a sufficiently large $n$. In other words, it can resolve the deadly triad problem.

\section{Off-policy $n$-step TD-learning based on the gradient operator}

In the previous section, an off-policy $n$-TD has been considered.
In this section, we will consider an off-policy $n$-step GTD algorithm ($n$-GTD).
To derive it, we follow similar steps as in~\cite{lee2023new}. In particular, let us consider the optimization problem~\eqref{eq:opt2}, which can be reformulated as the constrained optimization
\begin{align}
{\min _{\theta  \in {\mathbb R}^m}}\quad {\rm{s}}{\rm{.t}}{\rm{.}}\quad 0 = {\Phi ^T}{D^\beta }({T^n}(\Phi {\theta}) - \Phi \theta).\label{eq:opt3}
\end{align}

Note that in~\eqref{eq:opt3}, we introduce a null objective, $f\equiv 0$, to fit the problem into an optimization form. We can easily prove that the optimization admits a unique solution~\cite{lee2023new}, which is identical to the solution of $n$-PBE~\eqref{eq:1}.

Next, we formulate~\eqref{eq:opt3} into a min-max saddle-point problem by introducing the corresponding Lagrangian function
\begin{align}
{\min _{\theta  \in {\mathbb R}^m}}{\max _{\lambda  \in {\mathbb R}^m}}L(\theta ,\lambda ): = {\lambda ^T}{\Phi ^T}{D^\beta }({T^n}(\Phi \theta ) - \Phi \theta ).\label{eq:saddle1}
\end{align}

Now, a regularization term is introduced to make it strongly concave in $\lambda$, and obtain the following modification:
\begin{align}
L(\theta ,\lambda ): = {\lambda ^T}{\Phi ^T}{D^\beta }({T^n}(\Phi \theta ) - \Phi \theta ) - \frac{1}{2}{\lambda ^T}{\Phi ^T}{D^\beta }\Phi \lambda. \label{eq:Lagrangian1}
\end{align}

The corresponding saddle-point problem of~\eqref{eq:Lagrangian1} is then given as follows.
\begin{align}
{\min _{\theta  \in {\mathbb R}^m}}{\max _{\lambda  \in {\mathbb R}^m}}L(\theta ,\lambda ):=& {\lambda ^T}{\Phi ^T}{D^\beta }({T^n}(\Phi \theta ) - \Phi \theta )\nonumber\\
& - \frac{1}{2}{\lambda ^T}{\Phi ^T}{D^\beta }\Phi \lambda. \label{eq:saddle2}
\end{align}

We can prove that the solutions of~\eqref{eq:saddle2} is identical to the solution of $n$-PBE in~\eqref{eq:1}.
\begin{proposition}[{\cite{lee2023new}}]\label{prop:2}
A solution of~\eqref{eq:saddle2} exists, is unique, and is given by $\theta = \theta_*^n$ and $\lambda = 0$.
\end{proposition}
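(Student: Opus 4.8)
The plan is to characterize the solutions of the saddle-point problem~\eqref{eq:saddle2} through its first-order optimality (saddle-point) conditions and then solve the resulting linear system explicitly, invoking the nonsingularity result from~\cref{thm:contraction-property1}. First I would write $L(\theta,\lambda)$ in matrix form using the affine structure of $T^n$: since ${T^n}(\Phi\theta) = b_0 + {\gamma^n}{({P^\pi})^n}\Phi\theta$ with $b_0 := {R^\pi} + \gamma{P^\pi}{R^\pi} + \cdots + {\gamma^{n-1}}{({P^\pi})^{n-1}}{R^\pi}$, we have ${T^n}(\Phi\theta) - \Phi\theta = b_0 - (I - {\gamma^n}{({P^\pi})^n})\Phi\theta$, so $L$ is bilinear in $(\theta,\lambda)$ plus a quadratic term in $\lambda$. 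Setting $A := {\Phi^T}{D^\beta}(I - {\gamma^n}{({P^\pi})^n})\Phi$ and $b := {\Phi^T}{D^\beta}b_0$ and $C := {\Phi^T}{D^\beta}\Phi \succ 0$, the Lagrangian becomes $L(\theta,\lambda) = {\lambda^T}(b - A\theta) - \tfrac12{\lambda^T}C\lambda$.

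Next I would compute the stationarity conditions. Since $L$ is concave in $\lambda$ (because $C \succ 0$) and affine (hence convex) in $\theta$, a point $(\theta,\lambda)$ is a saddle point if and only if ${\nabla_\lambda}L = 0$ and ${\nabla_\theta}L = 0$, i.e.\ $b - A\theta - C\lambda = 0$ and $-{A^T}\lambda = 0$. By~\cref{thm:contraction-property1}, $n \geq n^*$ (contraction of $\Pi{T^n}$) implies $A$ is nonsingular, so ${A^T}$ is nonsingular and the second equation forces $\lambda = 0$. Substituting into the first equation gives $b - A\theta = 0$, hence $\theta = A^{-1}b$, which is exactly the closed-form fixed point $\theta_*^n$ of~\eqref{eq:1} given in~\cref{lemma:1}. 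This establishes existence and uniqueness simultaneously, and identifies the solution as $(\theta,\lambda) = (\theta_*^n, 0)$.

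The one subtlety to address carefully — and the place I expect the main obstacle — is justifying that the saddle-point conditions are both necessary and sufficient, and that a saddle point exists in the first place, given that the objective is linear (not coercive) in $\theta$. The cleanest route is to note that for \emph{fixed} $\lambda$, $L(\theta,\lambda)$ is affine in $\theta$; its infimum over $\theta$ is $-\infty$ unless the coefficient of $\theta$ vanishes, i.e.\ unless ${A^T}\lambda = 0$, which (by nonsingularity of $A$) holds only at $\lambda = 0$. Thus the min-max value is finite only along $\lambda = 0$, where $\max_\lambda \min_\theta L = \min_\theta L(\theta,0) = \min_\theta 0 = 0$, and by weak duality the primal min-max equals $0$ as well, attained precisely when the inner maximization in $\lambda$ is at a critical point with $b - A\theta = 0$. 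Alternatively, since this is a convex-concave quadratic game with the $\lambda$-block strongly concave, I can simply cite~\cite{lee2023new} for the existence of a unique saddle point and then use the argument above to pin down its value. I would present the stationarity computation in detail and handle the existence either by this direct argument or by reference to~\cite{lee2023new}, noting that the constrained problem~\eqref{eq:opt3} has a unique feasible point (since $A$ nonsingular means $A\theta = b$ has the unique solution $\theta_*^n$), which the Lagrangian formulation must recover.
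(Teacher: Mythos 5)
Your proof is correct. Note that the paper itself offers no argument for this proposition---it is stated as a citation to the reference---so your self-contained derivation via the first-order saddle-point conditions is exactly the standard route one would expect that reference to take: write $L(\theta,\lambda)=\lambda^T(b-A\theta)-\tfrac12\lambda^T C\lambda$, use convexity in $\theta$ and strong concavity in $\lambda$ to reduce the saddle-point characterization to the linear system $A^T\lambda=0$, $b-A\theta-C\lambda=0$, and invoke nonsingularity of $A$ to get $\lambda=0$ and $\theta=A^{-1}b=\theta_*^n$. Two small remarks. First, your existence argument via $\max_\lambda\min_\theta L$ and weak duality works, but a slightly more direct route is to perform the inner maximization explicitly: $\max_\lambda L(\theta,\lambda)=\tfrac12(b-A\theta)^T C^{-1}(b-A\theta)$, which is a strictly convex quadratic in $\theta$ (since $A$ is nonsingular and $C\succ 0$), so the min--max problem has a unique minimizer $\theta=A^{-1}b$ with optimal value $0$, at which the maximizing $\lambda$ is $C^{-1}(b-A\theta)=0$; this yields existence, uniqueness, and the closed form in one stroke. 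Second, the proposition is stated without an explicit hypothesis on $n$, but some condition guaranteeing nonsingularity of $A=\Phi^T D^\beta(I-\gamma^n(P^\pi)^n)\Phi$ is genuinely needed (and is needed even for $\theta_*^n$ to be well defined); your appeal to \cref{thm:contraction-property1} under $n\ge n^*$ is one valid choice, and the Hurwitz property from \cref{lemma:3} under $n\ge\bar n^*$---the hypothesis actually used in the convergence theorem for \cref{algo:TD2}---would serve equally well.
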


Now, let us turn our attention to the so-called continuous-time primal-dual gradient dynamics~\cite{qu2018exponential}
\begin{align}
{{\dot \theta }_t} = & - {\nabla _\theta }L({\theta _t},{\lambda _t}) =  - {({\gamma ^n}{({P^\pi })^n}\Phi  - \Phi )^T}{D^\beta }\Phi {\lambda _t}\nonumber\\
{{\dot \lambda }_t} =& {\nabla _\lambda }L(\theta ,{\lambda _t}) = {\Phi ^T}{D^\beta }({T^n}(\Phi {\theta _t}) - \Phi {\theta _t} - \Phi {\lambda _t}).\label{eq:7}
\end{align}

By replacing the updates by those with stochastic approximations of the model parameters using samples of the state and action, one can obtain a GTD version summarized in~\cref{algo:TD2}.
\begin{algorithm}[h]
\caption{Multi-step off-policy gradient TD-learning}
\begin{algorithmic}[1]

\State Initialize $(\theta _0,\lambda_0 )$.

\For{iteration step $i \in \{0,\ldots\}$}

\State Sample $s_0 \sim d^{\beta}$, and sample $({s_1},{s_2}, \ldots ,{s_n})$, $({a_0},{a_1}, \ldots ,{a_{n - 1}})$, and $({r_0},{r_1}, \ldots ,{r_{n - 1}})$ using the sampling oracle.

\State Update parameters according to
\begin{align*}
{\theta _{i + 1}} =& {\theta _i} + {\alpha _i}{\rho _{n - 1}}(\varphi ({s_0}) - {\gamma ^n}\varphi ({s_n})){H_{{\lambda _i}}}({s_0})\\
{\lambda _{i + 1}} =& {\lambda _i} + {\alpha _i}{\rho _{n - 1}}(G - {V_{{\theta _i}}}({s_0}) - {H_{{\lambda _i}}}({s_0}))\varphi ({s_0}),
\end{align*}
where $\rho_{n-1}: = \prod\nolimits_{k = 0}^{n - 1} {\frac{{\pi ({a_k}|{s_k})}}{{\beta ({a_k}|{s_k})}}}$ is the importance sampling ratio, $\varphi (s) = {\Phi ^T}{e_s}$ is the $s$-th row vector of $\Phi$, $G = \sum\limits_{k = 0}^{n - 1} {\gamma ^k  r_{k+1}}  + {\gamma ^n}{V_{{\theta _i}}}({s_n})$, ${V_{{\theta _i}}}(s) = e_s^T\Phi {\theta _i}$, and ${H_{{\lambda _i}}}(s) = e_s^T\Phi {\lambda _i}$.

\EndFor
\end{algorithmic}
\label{algo:TD2}
\end{algorithm}

Similar to~\cref{algo:TD1}, the convergence of~\cref{algo:TD2} can be easily established.
\begin{theorem}
Consider~\cref{algo:TD2}, and assume that the step-size satisfy~\eqref{eq:step-size-rule}.
Then, $\theta_k \to \theta_*^n$ as $k \to \infty$ with probability one for any $n \ge \bar n^*$, where $\bar n^*$ is given in the statement of~\cref{lemma:3}.
\end{theorem}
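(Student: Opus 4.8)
The plan is to mirror the proof of \cref{thm:convergence5}, using the O.D.E. method of Borkar and Meyn. First I would identify the O.D.E. associated with \cref{algo:TD2}. Taking conditional expectations of the stochastic updates over the sampling oracle, and using the fact that the importance sampling ratio $\rho_{n-1}$ corrects the behavior-policy expectation to the target-policy expectation (so that ${\mathbb E}[\rho_{n-1}(G - V_{\theta}(s_0))\varphi(s_0)] = {\Phi^T}{D^\beta}({T^n}(\Phi\theta) - \Phi\theta)$ and similarly for the other term), the mean-field dynamics coincide with the primal-dual gradient flow \eqref{eq:7}:
\begin{align*}
{{\dot \theta }_t} =& - {({\gamma ^n}{({P^\pi })^n}\Phi  - \Phi )^T}{D^\beta }\Phi {\lambda _t},\\
{{\dot \lambda }_t} =& {\Phi ^T}{D^\beta }({T^n}(\Phi {\theta _t}) - \Phi {\theta _t} - \Phi {\lambda _t}).
\end{align*}

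Next I would establish global asymptotic stability of this O.D.E. with the unique equilibrium $(\theta, \lambda) = (\theta_*^n, 0)$ guaranteed by \cref{prop:2}. The system is linear, so stability reduces to checking Hurwitzness of the block matrix
\begin{align*}
M = \begin{bmatrix} 0 & -A^T \\ A & -C \end{bmatrix},\quad A = {\Phi^T}{D^\beta}(I - {\gamma^n}{({P^\pi})^n})\Phi,\quad C = {\Phi^T}{D^\beta}\Phi \succ 0.
\end{align*}
For $n \geq \bar n^*$, \cref{lemma:3} gives that $A$ is Hurwitz; since $C \succ 0$, a standard Lyapunov argument for primal-dual dynamics (as in~\cite{qu2018exponential}) shows $M$ is Hurwitz. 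Concretely, one can use the Lyapunov function $\frac{1}{2}\|\theta - \theta_*^n\|_2^2 + \frac{1}{2}\|\lambda\|_2^2$ (the cross terms $-\lambda^T A^T(\theta - \theta_*^n)$ and $(\theta - \theta_*^n)^T A \lambda$ cancel after shifting), leaving $\dot{\mathcal V} = -\lambda^T C \lambda \leq 0$, and then invoke LaSalle together with the nonsingularity of $A$ to rule out trajectories staying on $\{\lambda = 0, \theta \neq \theta_*^n\}$; alternatively one shows $M + M^T \prec 0$ after a suitable similarity/congruence transformation. I would also need the scaling condition for the Borkar--Meyn theorem, namely that the rescaled vector fields $M x / c$ converge as $c \to \infty$ to $Mx$ whose origin is globally asymptotically stable — automatic here since the O.D.E. is already linear.

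Finally, I would check the remaining hypotheses of~\cite[Thm.~2.2]{borkar2000ode}: the step-sizes satisfy the Robbins--Monro conditions~\eqref{eq:step-size-rule} by assumption; the update maps are affine in $(\theta, \lambda)$ hence globally Lipschitz; the noise is a martingale difference sequence with conditional second moment bounded by an affine function of $\|(\theta_i,\lambda_i)\|^2$, which holds because the samples come from a fixed distribution and the importance ratio $\rho_{n-1}$ has bounded moments (finite state and action spaces, $\beta(a|s) > 0$). Invoking the theorem then yields $(\theta_k, \lambda_k) \to (\theta_*^n, 0)$ almost surely, and in particular $\theta_k \to \theta_*^n$. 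The main obstacle is the stability analysis of the coupled primal-dual O.D.E.: unlike \cref{thm:convergence5} where Hurwitzness of $A$ directly gives stability of a scalar-block system, here I must handle the skew-coupled block structure and argue carefully that the semidefinite (not definite) decrease $-\lambda^T C \lambda$ still forces convergence, which is precisely where the nonsingularity of $A$ from \cref{thm:contraction-property1}/\cref{lemma:3} enters through a LaSalle-type argument.
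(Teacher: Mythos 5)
Your proposal is correct and follows essentially the same route as the paper: identify the mean O.D.E. as the primal--dual gradient flow \eqref{eq:7}, prove its global asymptotic stability (the paper delegates this to~\cite{lee2023new} and~\cite{qu2018exponential}, whereas you supply the Lyapunov/LaSalle argument explicitly, using the nonsingularity of $A$ from \cref{lemma:3} to handle the semidefinite decrease), and then invoke the Borkar--Meyn theorem~\cite[Thm.~2.2]{borkar2000ode}. The extra detail you give on the skew-coupled block structure and on verifying the Borkar--Meyn hypotheses is consistent with, and more explicit than, the paper's proof.
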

\begin{proof}
The O.D.E. model of~\cref{algo:TD2} is given in~\eqref{eq:7}. The global asymptotic stability of~\eqref{eq:7} can be proved following the steps in~\cite{lee2023new} and using the notion of the primal-dual gradient dynamics~\cite{qu2018exponential}.
Then, the proof is completed by using Borkar and Mayen theorem in~\cite[Thm.~2.2]{borkar2000ode}.
\end{proof}

It is well known that the off-policy GTD algorithms~\cite{sutton2009convergent,sutton2009fast,ghiassian2020gradient,lee2023new} guarantee convergence under a deadly triad of function approximation, bootstrapping, and off-policy learning.
However, they require~\cref{assumption:2} for convergence, which may not be satisfied in general.
On the other hand, a benefit of using $n$-GTD in~\cref{algo:TD2} compared to the standard GTDs is that it does not require~\cref{assumption:2} provided that $n \ge \bar n^*$.

\section{Conclusion}

In this paper, we have investigated the convergence and properties of $n$-step TD-learning algorithms. We have proved that under the deadly triad scenario, the $n$-step TD-learning algorithms converge to useful solutions as the sampling horizon $n$ increases sufficiently. We have comprehensively examined the fundamental properties of their model-based deterministic counterparts, which can be viewed as prototype deterministic algorithms whose analysis plays a pivotal role in understanding and developing their model-free RL counterparts. Based on the analysis and insights from the deterministic algorithms, we have established convergence of two $n$-step TD-learning algorithms.

\bibliographystyle{IEEEtran}
\bibliography{reference}

\end{document}